\newif\ifconf
\newcommand{\myspace}[1]{}
\newcommand{\lref}[2][]{{#1~\ref{#2}}}
\definecolor{Darkblue}{rgb}{0,0,0.4}
\definecolor{Brown}{cmyk}{0,0.81,1.,0.60}
\definecolor{Purple}{cmyk}{0.45,0.86,0,0}
\newcommand{\mydriver}{hypertex}
 \renewcommand{\mydriver}{pdftex}
\newcommand{\lref}[2][]{\hyperref[#2]{#1~\ref*{#2}}}
\newtheorem{theorem}{Theorem}[section]
\newtheorem{lemma}[theorem]{Lemma}
\numberwithin{algorithm}{section}
\newenvironment{proof}{{\bf Proof:  }}{\hfill\rule{2mm}{2mm}}
\newcommand{\junk}[1]{}
\newcommand{\ignore}[1]{}
\newcommand{\R}[0]{{\ensuremath{\mathbb{R}}}}
\newcommand{\Z}[0]{{\ensuremath{\mathbb{Z}}}}
\def\ceil#1{\lceil #1 \rceil}
\newcommand{\poly}{\operatorname{poly}}
\newcommand{\sse}{\subseteq}
\newcommand{\I}{{\mathcal{I}}}
\renewcommand{\P}{{\mathcal{P}}}
\newcommand{\F}{{\mathcal{F}}}
\newcommand{\M}{{\mathcal{M}}}
\newcommand{\x}{{\mathbf{x}}}
\newcommand{\E}{{\mathbf{E}}}
\newcommand{\e}{\varepsilon}
\newcommand{\eps}{\varepsilon}
\renewcommand{\ts}{\textstyle}
\newcommand{\ts}{\textstyle}
\newcounter{note}[section]
 \newcommand{\agnote}[1]{}
 \newcommand{\ktnote}[1]{}
 \newcommand{\dwnote}[1]{}
 \newcommand{\uwnote}[1]{}
\newcommand{\qedsymb}{\hfill{\rule{2mm}{2mm}}}
\newcommand{\initOneLiners}{%
    \setlength{\itemsep}{0pt}
    \setlength{\parsep }{0pt}
    \setlength{\topsep }{0pt}
}
\newenvironment{OneLiners}[1][\ensuremath{\bullet}]
    {\begin{list}
        {#1}
        {\initOneLiners}}
    {\end{list}}
\newcommand{\squishlist}{
 \begin{list}{$\bullet$}
  { \setlength{\itemsep}{0pt}
     \setlength{\parsep}{3pt}
     \setlength{\topsep}{3pt}
     \setlength{\partopsep}{0pt}
     \setlength{\leftmargin}{1.5em}
     \setlength{\labelwidth}{1em}
     \setlength{\labelsep}{0.5em} } }
\newcommand{\squishend}{
  \end{list}  }
\newcommand{\rank}{\textsf{rank}}
\newcommand{\spn}{\textsf{span}}
\newcommand{\MMM}{\textsf{MMM}\xspace}
\newcommand{\MSM}{\textsf{MSM}\xspace}
\begin{document}

\title{Changing Bases: Multistage Optimization for Matroids and Matchings}
\author{
Anupam Gupta\thanks{Computer Science Dept., Carnegie Mellon
    University. Research performed while the author was at Microsoft Reserach Silicon Valley. Research was partly supported by
    NSF awards CCF-0964474 and CCF-1016799, and by a grant from the
    CMU-Microsoft Center for Computational Thinking. \texttt{anupamg@cs.cmu.edu}}
\and
Kunal Talwar\thanks{Microsoft Research SVC, Mountain View, CA 94043. \texttt{kunal@microsoft.com}}
\and
Udi Wieder\thanks{Microsoft Research SVC, Mountain View, CA 94043.\texttt{uwieder@microsoft.com}}
}


\maketitle

\pagestyle{plain}

\begin{abstract}
This paper is motivated by the fact that many systems need to be
maintained continually while the underlying costs change over time. The
challenge then is to continually maintain near-optimal solutions to the
underlying optimization problems, without creating too much churn in the
solution itself. We model this as a  multistage combinatorial
optimization problem where the input is a sequence of cost functions
(one for each time step); while we can change the solution from step to
step, we incur an additional cost for every such change.

We first study the multistage matroid maintenance problem, where we need
to maintain a base of a matroid in each time step under the changing
cost functions and acquisition costs for adding new elements. The online
version of this problem generalizes onine paging, and is a
well-structured case of the metrical task systems.  E.g., given a graph,
we need to maintain a spanning tree $T_t$ at each step: we pay
$c_t(T_t)$ for the cost of the tree at time $t$, and also $| T_t
\setminus T_{t-1} |$ for the number of edges changed at this step. Our
main result is a polynomial time $O(\log m \log r)$-approximation to the
online multistage matroid maintenance problem, where $m$ is the number
of elements/edges and $r$ is the rank of the matroid. This improves on
results of Buchbinder et al.~\cite{BuchbinderCNS12} who addressed the
\emph{fractional} version of this problem under uniform acquisition
costs, and Buchbinder, Chen and Naor~\cite{BuchbinderCN14} who studied
the fractional version of a more general problem. We also give an
$O(\log m)$ approximation for the offline version of the problem.  These
bounds hold when the acquisition costs are non-uniform, in which case
both these results are the best possible unless P=NP.

We also study the perfect matching version of the problem, where we must
maintain a perfect matching at each step under changing cost functions
and costs for adding new elements. Surprisingly, the hardness
drastically increases: for any constant $\eps>0$, there is
no $O(n^{1-\eps})$-approximation to the multistage matching maintenance
problem, even in the offline case.
\end{abstract}



\vspace{-0.1in}

\section{Introduction}
\label{sec:introduction}
\myspace{-0.17in}

In a typical instance of a combinatorial optimization problem the underlying constraints model a static application frozen in one time step.
In many applications however, one needs to solve instances of the combinatorial optimization problem that changes over time. While this is naturally handled by re-solving the optimization problem in each time step separately, changing the solution one holds from one time step to the next often incurs a transition cost. Consider, for example, the problem faced by a vendor who needs to get supply of an item from $k$ different producers to meet her demand. On any given day, she could get prices from each of the producers and pick the $k$ cheapest ones to buy from. As prices change, this set of the $k$ cheapest producers may change. However, there is a fixed cost to starting and/or ending a relationship with any new producer. The goal of the vendor is to minimize the sum total of these two costs: an "acquisition cost" $a(e)$ to be incurred each time she starts a new business relationship with a producer, and a per period cost $c_t(e)$ of buying in period $t$ from the each of the $k$ producers that she picks in this period, summed over $T$ time periods. In this work we consider a generalization of this problem, where the constraint ``pick $k$  producers'' may be replaced by a more general combinatorial constraint. It is natural to ask whether simple combinatorial problems for which the one-shot problem is easy to solve, as the example above is, also admit good algorithms for the multistage version.

The first problem we study is the {\em Multistage Matroid Maintenance}
problem (\MMM), where the underlying combinatorial constraint is that of
maintaining a base of a given matroid in each period. In the example
above, the requirement the vendor buys from $k$ different producers
could be expressed as optimizing over the $k-$uniform matroid. In a more
interesting case one may want to maintain a spanning tree of a given
graph at each step, where the edge costs $c_t(e)$ change over time, and
an acquisition cost of $a(e)$ has to paid every time a new edge enters
the spanning tree. (A formal definition of the \MMM problem appears in Section~\ref{sec:formal-defs}.) While our emphasis is on the online problem, we will mention results for the offline version as well, where the whole input is given in advance.



A first observation we make is that if the matroid in question is
allowed to be different in each time period, then the problem is hard to
approximate to any non-trivial factor (see
Section~\ref{sec:time-varying}) even in the offline case. We therefore focus on the case where the same matroid is given at each time period. Thus we restrict ourselves to the case when the matroid is the same for all time steps.


To set the baseline, we first study the offline version of the problem
(in Section~\ref{sec:offline}), where all the input parameters are known
in advance. We show an LP-rounding algorithm which approximates the
total cost up to a logarithmic factor. This approximation factor is no
better than that using a simple greedy algorithm, but it will be useful to see
the rounding algorithm, since we will use its extension in the online
setting. We also show a matching hardness reduction, proving that the
problem is hard to approximate to better than a logarithmic factor; this
hardness holds even for the special case of spanning trees in
graphs.

We then turn to the online version of the problem, where in each time
period, we learn the costs $c_t(e)$ of each element that is available at
time $t$, and we need to pick a base $S_t$ of the matroid for this
period. We analyze the performance of our online algorithm in the
competitive analysis framework: i.e., we compare the cost of the online
algorithm to that of the optimum solution to the offline instance thus
generated. In Section~\ref{sec:online}, we give an efficient randomized
$O(\log |E| \log (rT))$-competitive algorithm for this problem against
any oblivious adversary (here $E$ is the universe for the matroid and
$r$ is the rank of the matroid), and show that no polynomial-time online
algorithm can do better. We also show that the requirement that the
algorithm be randomized is necessary: any deterministic algorithm must
incur an overhead of $\Omega(\min(|E|,T))$, even for the simplest of
matroids.

Our results above crucially relied on the properties of matriods, and it
is natural to ask if we can handle more general set systems, e.g.,
$p$-systems.  In Section~\ref{sec:matchings}, we consider the case where
the combinatorial object we need to find each time step is a perfect
matching in a graph. Somewhat surprisingly, the problem here is
significantly harder than the matroid case, even in the offline case. In
particular, we show that even when the number of periods is a constant,
no polynomial time algorithm can achieve an approximation ratio better
than $\Omega(|E|^{1-\epsilon})$ for any constant $\epsilon>0$.



\myspace{-0.24in}
\subsection{Techniques}
\myspace{-0.15in}

We first show that the \MMM problem, which is a packing-covering
problem, can be reduced to the analogous problem of maintaining a
spanning set of a matroid. We call the latter the {\em Multistage
  Spanning set Maintenance} (\MSM) problem. While the reduction itself
is fairly clean, it is surprisingly powerful and is what enables us to improve on previous works. The \MSM problem is a
covering problem, so it admits better approximation ratios and allows
for a much larger toolbox of techniques at our disposal. We note that
this is the only place where we need the matroid to not change over
time: our algorithms for \MSM work when the matroids change over time,
and even when considering matroid intersections. The \MSM problem is
then further reduced to the case where the holding cost of an element is
in $\{0,\infty\}$, this reduction simplifies the analysis.

In the offline case, we present two algorithms.  We first observe that a
greedy algorithm easily gives an $O(\log T)$-approximation.
We then present a simple randomized rounding algorithm for the linear
program. This is analyzed using recent results on contention resolution
schemes~\cite{CVZ11}, and gives an approximation of $O(\log rT)$, which
can be improved to $O(\log r)$ when the acquisition costs are
uniform. This LP-rounding algorithm will be an important constituent of
our algorithm for the online case.

For the online case we again use that the problem can be written as a
covering problem, even though the natural LP formulation has both
covering and packing constraints. Phrasing it as a covering problem
(with box constraints) enables us to use, as a black-box, results on
online algorithms for the fractional problem~\cite{BN-MOR}. This
formulation however has exponentially many constraints. We handle that
by showing a method of adaptively picking violated constraints such that
only a small number of constraints are ever picked. The crucial insight
here is that if $x$ is such that $2x$ is not feasible, then $x$ is at
least $\frac{1}{2}$ away in $\ell_1$ distance from any feasible
solution; in fact there is a single constraint that is violated to an
extent half. This insight allows us to make non-trivial progress (using
a natural potential function) every time we bring in a constraint, and
lets us bound the number of constraints we need to add until constraints
are satisfied by $2x$. 

\myspace{-0.2in}
\subsection{Related Work}
\label{sec:related-work}
\myspace{-0.15in}

Our work is related to several lines of research, and extends some of
them. The paging problem is a special case of \MMM
where the underlying matroid is a uniform one. Our online algorithm
generalizes the $O(\log k)$-competitive algorithm for weighted caching~\cite{BBN-stoc08-caching}, using existing
online LP solvers in a black-box fashion. Going from uniform
to general matroids loses a logarithmic
factor (after rounding), we show such a
loss is unavoidable unless we use exponential time.

The \MMM problem is also a special case of
classical Metrical Task Systems \cite{Borodin:1992:OOA:146585.146588};
see~\cite{DBLP:conf/alt/AbernethyBBS10,DBLP:conf/icalp/BansalBN10} for more recent work. The best
approximations for metrical task systems are poly-logarithmic in the size
of the metric space. In our case the metric space is specified by the
total number of bases of the matroid which is often exponential, so
these algorithms only give a trivial approximation.



In trying to unify online learning and competitive analysis, Buchbinder
et al.~\cite{BuchbinderCNS12} consider a problem on matroids very
similar to ours. The salient differences are: (a)~in their model all
acquisition costs are the same, and (b)~they work with fractional bases
instead of integral ones. They give an $O(\log n)$-competitive algorithm
to solve the fractional online LP with uniform acquisition costs (among
other unrelated results). Our online LP solving generalizes their result
to arbitrary acquisition costs. They leave open the question of getting
integer solutions online (Seffi Naor, private communication), which we
present in this work. In a more recent work, Buchbinder, Chen and
Naor~\cite{BuchbinderCN14} use a regularization approach to solving a
broader set of fractional problems, but once again can do not get
integer solutions in a setting such as
ours. 

Shachnai et
al.~\cite{Shachnai:2012:TAC:2247370.2247422} consider
``reoptimization'' problems: given a starting solution and a new instance,
they want to balance the transition cost and the cost on the new
instance. This is a two-timestep version of our problem, and the
short time horizon raises a very different set of issues (since the output
solution does not need to itself hedge against possible subsequent
futures). They consider a number of optimization/scheduling problems in
their framework.

Cohen et al.~\cite{DBLP:journals/corr/abs-1302-2137} consider several
problems in the framework of the stability-versus-fit tradeoff; e.g.,
that of finding ``stable'' solutions which given the previous solution,
like in reoptimization, is the current solution that maximizes the
quality minus the transition costs. They show maintaining stable
solutions for matroids becomes a repeated two-stage reoptimization
problem; their problem is poly-time solvable, whereas matroid problems
in our model become NP-hard. The reason is that the solution for two
time steps does not necessarily lead to a base from which it is easy to
move in subsequent time steps, as our
hardness reduction shows. They consider a multistage offline version of
their problem (again maximizing fit minus stability) which is very
similar in spirit and form to our (minimization) problem, though the
minus sign in the objective function makes it difficult to approximate
in cases which are not in poly-time.

In dynamic Steiner tree maintenance~\cite{IW91,MSVW12,GuGK13} where the
goal is to maintain an approximately optimal Steiner tree for a varying
instance (where terminals are added) while changing few edges at each
time step. In dynamic load balancing~\cite{AGZ99,EL11} one has to
maintain a good scheduling solution while moving a small number of jobs
around. The work on lazy experts in the online prediction
community~\cite{CesabianchiL06} also deals with similar concerns.


There is also work on ``leasing''
problems~\cite{DBLP:conf/focs/Meyerson05, AnthonyG07, NagarajanW08}: these are optimization problems where
elements can be obtained for an interval of any length, where the cost
is concave in the lengths; the instance changes at each timestep. The
main differences are that the solution only needs to be feasible at each
timestep (i.e., the holding costs are $\{0, \infty\}$), and that any
element can be leased for any length $\ell$ of time starting at any
timestep for a cost that depends only on $\ell$, which gives these
problems a lot of uniformity. In turn, these leasing problems are
related to ``buy-at-bulk'' problems.





\myspace{-0.2in}
\section{Maintaining Bases to Maintaining Spanning Sets}
\label{sec:prelims}
\myspace{-0.18in}

Given reals $c(e)$ for elements $e \in E$, we will use $c(S)$ for $S
\sse E$ to denote $\sum_{e \in S} c(e)$. We denote $\{1,2,\ldots, T\}$
by $[T]$.

We assume basic familiarity with matroids: see, e.g.,~\cite{Sch-book}
for a detailed treatment. Given a matroid $\M = (E, \I)$, a \emph{base}
is a maximum cardinality independent set, and a \emph{spanning set} is a
set $S$ such that $\rank(S) = \rank(E)$; equivalently, this set contains
a base within it. The \emph{span} of a set $S \sse E$ is $\spn(S) = \{ e
\in E \mid \rank(S + e) = \rank(S) \}$. The \emph{matroid polytope}
$\P_I(\M)$ is defined as $\{ x \in \R^{|E|}_{\geq 0} \mid x(S) \leq
\rank(S) \,\,\forall S \sse E \}$. The \emph{base polytope} $\P_B(\M) =
\P_I(\M) \cap \{ x \mid x(E) = \rank(E) \}$. We will sometimes use $m$ to denote $|E|$ and $r$ to denote the rank of the matroid.
\myspace{-0.2in}

\subsubsection*{Formal Definition of Problems}
\label{sec:formal-defs}
\myspace{-0.17in}

An instance of the {\em Multistage Matroid Maintenance} (\MMM) problem consists of a matroid $\M = (E, \I)$, an
\emph{acquisition cost} $a(e)\geq 0$ for each $e \in E$, and for every
timestep $t \in [T]$ and element $e \in E$, a \emph{holding cost} cost
$c_t(e)$. The goal is to find bases $\{ B_t \in \I \}_{t \in [T]}$  to
minimize
\begin{gather}
  \ts \sum_t \big( c_t(B_t) + a(B_t \setminus B_{t-1}) \big),
\end{gather}
where we
define $B_0 := \emptyset$. A related problem is the {\em Multistage Spanning set Maintenance}(\MSM) problem, where
we want to maintain a spanning set $S_t \sse E$ at each time, and cost
of the solution $\{S_t\}_{t \in [T]}$ (once again with $S_0 := \emptyset$) is
\begin{gather}
  \ts \sum_t \big( c_t(S_t) + a(S_t \setminus S_{t-1}) \big).
\end{gather}

\myspace{-0.2in}
\subsubsection*{Maintaining Bases versus Maintaining Spanning Sets}
\label{sec:cover-vs-pack}
\myspace{-0.17in}
The following lemma shows the equivalence of maintaining bases and spanning sets.  This enables us to significantly simplify the problem and avoid the difficulties faced by previous works on this problem.
\begin{lemma}
  \label{lem:pack-cover}
  For matroids, the optimal solutions to \MMM and \MSM have the same
  costs.
\end{lemma}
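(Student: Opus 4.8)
The plan is to establish the two inequalities $\OPT(\MSM) \le \OPT(\MMM)$ and $\OPT(\MMM) \le \OPT(\MSM)$ separately. The first is immediate: a base is in particular a spanning set, so any feasible \MMM solution $\{B_t\}_{t\in[T]}$ is also a feasible \MSM solution, and the two objectives are the same expression $\sum_t\big(c_t(B_t)+a(B_t\setminus B_{t-1})\big)$; hence $\OPT(\MSM)\le\OPT(\MMM)$.

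For the reverse inequality I would take any \MSM solution $\{S_t\}_{t\in[T]}$ and ``prune'' it to a sequence of bases $B_t\sse S_t$ of no greater cost. The holding-cost part is trivial since $B_t\sse S_t$ and the costs are nonnegative, so $c_t(B_t)\le c_t(S_t)$; the only real work is bounding the acquisition cost $\sum_t a(B_t\setminus B_{t-1})$. The key idea is to define the bases \emph{backwards in time}: let $B_T$ be any base with $B_T\sse S_T$, and for $t=T-1,\dots,1$ let $B_t$ be any base with $B_{t+1}\cap S_t\sse B_t\sse S_t$. Such a $B_t$ exists because $B_{t+1}\cap S_t$ is independent (it is a subset of the base $B_{t+1}$, and here we use that the matroid is the same at every step) and is contained in the spanning set $S_t$, so by the matroid augmentation property it can be completed to a base using only elements of $S_t$.

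It then remains to account for the acquisition cost step by step. By construction $B_{t+1}\cap S_t\sse B_t$ for every $t\le T-1$, so every element of $B_{t+1}\setminus B_t$ lies outside $S_t$; combined with $B_{t+1}\sse S_{t+1}$ this gives $B_{t+1}\setminus B_t\sse S_{t+1}\setminus S_t$ and hence $a(B_{t+1}\setminus B_t)\le a(S_{t+1}\setminus S_t)$. For the first step $a(B_1\setminus B_0)=a(B_1)\le a(S_1)=a(S_1\setminus S_0)$ since $B_1\sse S_1$ and $a\ge 0$. Summing over $t$ and adding the holding-cost inequalities yields $\sum_t\big(c_t(B_t)+a(B_t\setminus B_{t-1})\big)\le\sum_t\big(c_t(S_t)+a(S_t\setminus S_{t-1})\big)$, so starting from an optimal \MSM solution we conclude $\OPT(\MMM)\le\OPT(\MSM)$. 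Together with the first paragraph this proves equality.

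The main thing to notice — and the place where a first attempt is likely to stumble — is that pruning \emph{forwards} in time does not obviously work: to avoid re-acquiring an element one would want $B_t$ to retain exactly those elements of $B_{t-1}$ that $S_t$ still makes available, but guaranteeing this seems to require information about the future. Processing in reverse converts that would-be lookahead into the purely local, always-satisfiable requirement $B_{t+1}\cap S_t\sse B_t\sse S_t$, after which the matroid augmentation step and the cost bookkeeping are routine.
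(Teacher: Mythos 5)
Your proof is correct, but it takes a genuinely different route from the paper's. You build the bases \emph{backwards} in time, enforcing $B_{t+1}\cap S_t\sse B_t\sse S_t$, which yields the clean stepwise domination $B_{t+1}\setminus B_t\sse S_{t+1}\setminus S_t$ and makes the cost accounting immediate. The paper instead prunes \emph{forwards}: it sets $B_t$ to be a base of $S_t$ extending $B_{t-1}\cap S_t$ (no lookahead needed), and since stepwise domination then fails --- an element of $S_t\cap S_{t-1}$ may enter $B_t$ without entering $S_t$ at that step --- it uses a global charging argument, charging each acquisition of $e$ into some $B_t$ to the pair $(e,t^\star)$ where $t^\star$ is the most recent time $e$ entered the spanning sets, and arguing each pair is charged at most once. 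Your construction buys a simpler bookkeeping; the paper's buys something it actually needs later: the forward transformation uses only $B_{t-1}$ and $S_t$, so the reduction from \MSM to \MMM can be carried out \emph{online}, a fact invoked explicitly in Section~\ref{sec:online}. Your backward construction requires knowledge of the future $S_{t+1},\dots,S_T$ and so proves the offline cost equality (which is all the lemma literally states) but not the online version. Relatedly, your closing remark that forward pruning ``does not obviously work'' because retaining the surviving elements of $B_{t-1}$ seems to need future information is off: retaining $B_{t-1}\cap S_t$ is purely local; the only subtlety in the forward approach is that the per-step inequality must be replaced by the charging argument.
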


\begin{proof}
  Clearly, any solution to \MMM is also a solution to \MSM, since a base
  is also a spanning set. Conversely, consider a solution $\{ S_t \}$ to
  \MSM. Set $B_1$ to any base in $S_1$. Given $B_{t-1} \sse S_{t-1}$,
  start with $B_{t-1} \cap S_t$, and extend it to any base $B_t$ of
  $S_t$. This is the only step where we use the matroid
  properties---indeed, since the matroid is the same at each time, the
  set $B_{t-1} \cap S_t$ remains independent at time $t$, and by the
  matroid property this independent set can be extended to a base.
  Observe that this process just requires us to know the base $B_{t-1}$
  and the set $S_t$, and hence can be performed in an online fashion.

  We claim that the cost of $\{ B_t \}$ is no more than that of $\{ S_t
  \}$. Indeed, $c_t(B_t) \leq c_t(S_t)$, because $B_t \sse
  S_t$. Moreover, let $D := B_t \setminus B_{t-1}$, we pay $\sum_{e \in
    D} a_e$ for these elements we just added. To charge this, consider
  any such element $e \in D$, let $t^\star \leq t$ be the time it was
  most recently added to the cover---i.e., $e \in S_{t'}$ for all $t'
  \in [t^\star, t]$, but $e \not\in S_{t^\star - 1}$. The \MSM solution
  paid for including $e$ at time $t^\star$, and we charge our
  acquisition of $e$ into $B_t$ to this pair $(e, t^\star)$. It suffices
  to now observe that we will not charge to this pair again, since the
  procedure to create $\{ B_t\}$ ensures we do not drop $e$ from the
  base until it is dropped from $S_t$ itself---the next time we pay an
  addition cost for element $e$, it would have been dropped and added in
  $\{ S_t\}$ as well.
\end{proof}

Hence it suffices to give a good solution to the \MSM problem. We observe that the
proof above uses the matroid property crucially and would not hold,
e.g., for matchings. It also requires that the \emph{same} matroid be
given at all time steps. Also, as noted above, the reduction is online: the instance is the same, and given an \MSM solution it can be transformed online to a solution to \MMM.


\myspace{-0.1in}
\subsubsection*{Elements and Intervals}
\label{sec:intervals}
\myspace{-0.1in}

We will find it convenient to think of an instance of \MSM as being a
matroid $\M$, where each element only has an acquisition cost $a(e) \geq
0$, and it has a lifetime $I_e = [l_e,r_e]$. There are no holding costs,
but the element $e$ can be used in spanning sets only for timesteps $t
\in I_e$. Or one can equivalently think of holding costs being zero for
$t \in I_e$ and $\infty$ otherwise.

\emph{An Offline Exact Reduction.}  The translation is the natural one:
given instance $(E, \I)$ of \MSM, create elements $e_{lr}$ for each $e
\in E$ and $1 \leq l \leq r \leq T$, with acquisition cost $a(e_{lr}) :=
a(e) + \sum_{t = l}^{r} c_t(e)$, and interval $I_{e_{lr}} :=
[l,r]$. (The matroid is extended in the natural way, where all the
elements $e_{lr}$ associated with $e$ are parallel to each other.)  The
equivalence of the original definition of \MSM and this interval view is
easy to verify.

\emph{An Online Approximate Reduction.}  Observe that the above reduction created at most $\binom{T}{2}$ copies of each element, and required knowledge of all the costs. If we are willing to lose a constant factor in the approximation,
we can perform a reduction to the interval model in an \emph{online}
fashion as follows. For element $e \in E$, define $t_0 = 0$, and create
many parallel copies $\{e_i\}_{i \in \Z_+}$ of this element (modifying
the matroid appropriately). Now the $i^{th}$ interval for $e$ is
$I_{e_i} := [t_{i-1}+1, t_i]$, where $t_i$ is set to $t_{i-1} + 1$ in
case $c_{t_{i-1}+1}(e) \geq a(e)$, else it is set to the \emph{largest}
time such that the total holding costs $\sum_{t = t_{i-1}+1}^{t_i}
c_t(e)$ for this interval $[t_{i-1}+1, t_i]$ is at most $a(e)$. This
interval $I_{e_i}$ is associated with element $e_i$, which is only
available for this interval, at cost  $a(e_i) = a(e) +
c_{t_{i-1}+1}(e)$.

A few salient points about this reduction: the intervals for an original
element $e$ now partition the entire time horizon $[T]$. The number of
elements in the modified matroid whose intervals contain any time $t$ is
now only $|E| = n$, the same as the original matroid; each element of
the modified matroid is only available for a single interval. Moreover, the
reduction can be done online: given the past history and the holding
cost for the current time step $t$, we can ascertain whether $t$ is the
beginning of a new interval (in which case the previous interval ended
at $t-1$) and if so, we know the cost of acquiring a copy of $e$ for the
new interval is $a(e) + c_t(e)$. It is easy to check that the optimal
cost in this interval model is within a constant factor of the optimal
cost in the original acquisition/holding costs model.

\myspace{-0.2in}
\section{Offline Algorithms}
\label{sec:offline}
\myspace{-0.17in}


Given the reductions of the previous section, we can focus on the \MSM
problem. Being a covering problem, \MSM is conceptually easier to solve:
e.g., we could use algorithms for submodular set cover~\cite{Wol82} with
the submodular function being the sum of ranks at each of the timesteps,
to get an $O(\log T)$ approximation.

In Section~\ref{sec:greedy}, we give a dual-fitting proof of the
performance of the greedy algorithm. 
Here we give an LP-rounding algorithm which gives an $O(\log
rT)$ approximation; this can be improved to $O(\log r)$ in the common
case where all acquisition costs are unit. (While the approximation
guarantee is no better than that from submodular set cover, this
LP-rounding algorithm will prove useful in the online case in
Section~\ref{sec:online}).  Finally, the hardness results of
Section~\ref{sec:hardness-offline} show that we cannot hope to do much
better than these logarithmic approximations.

\myspace{-0.2in}

\subsection{The LP Rounding Algorithm}
\label{sec:lp-round}
\myspace{-0.15in}


We now consider an LP-rounding algorithm for the \MMM problem; this will
generalize to the online setting, whereas it is unclear how to extend
the  greedy algorithm to that case. For the LP rounding, we use the
standard definition of the \MMM problem to write the following LP
relaxation.
\begin{align}
  \min \sum_{t,e} a(e) \cdot y_t(e) &+ \sum_{t,e} c_t(e) \cdot z_{t}(e)
  \tag{LP2} \label{eq:lp2} \\
  \text{s.t.~~~} z_{t} &\in \P_B(\M) \qquad\qquad\qquad \forall t \notag
  \\
  y_t(e) &\geq z_{t}(e) - z_{t-1}(e) \qquad \forall t, e \notag\\
  y_t(e), z_{t}(e) &\geq 0 \notag
\end{align}
It remains to round the solution to get a feasible solution to \MSM
(i.e., a spanning set $S_t$ for each time) with expected cost at most
$O(\log n)$ times the LP value, since we can use
Lemma~\ref{lem:pack-cover} to convert this to a solution for \MMM at no
extra cost.
The following lemma is well-known
 (see, e.g.~\cite{CalinescuCPV07}). We give a proof for completeness. \ktnote{Is this the best reference?}
\begin{lemma}
  \label{lem:alon}
  For a fractional base $z \in \P_B(M)$, let $R(z)$ be the set obtained
  by picking each element $e \in E$ independently with probability
  $z_e$. Then $E[\rank(R(z))] \geq r(1-1/e)$.
\end{lemma}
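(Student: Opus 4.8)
The plan is to lower-bound the expected rank of the random set $R(z)$ by exhibiting, for each realization, a large independent subset, and then take expectations. The cleanest way I know is to build an independent set greedily while exposing the coins one element at a time in a fixed order $e_1, e_2, \ldots, e_m$, and to lower-bound the probability that $e_i$ extends the current independent set. Concretely, let $R_i$ be the (random) set of elements among $e_1,\ldots,e_{i-1}$ that were picked, and let $A_i \subseteq R_i$ be the independent set the greedy process has accumulated so far; by matroidal exchange, $\rank(R_i) = |A_i|$ and $A_i$ spans $R_i$. When we expose $e_i$: if $e_i \notin \spn(A_i)$ then picking it (probability $z_{e_i}$) strictly increases the rank. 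The subtlety is that $A_i$ is random, so I want a bound that does not depend on the realization.

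The key trick I would use is a potential/linearity argument rather than a direct conditioning. For a fixed realization, let $S := R(z)$ and run the greedy construction; call the resulting independent set $A$. The number of elements $e_i \in S$ that fail to extend $A$ at their exposure step is exactly $|S| - \rank(S)$, and each such $e_i$ satisfies $e_i \in \spn(A_i) \subseteq \spn(A)$. So every picked-but-rejected element lies in $\spn(A)$. Since $z \in \P_B(\M)$, we have $z(\spn(A)) \le \rank(\spn(A)) = \rank(A) = \rank(S)$, hence $\E[\,|S| - \rank(S) \mid A\,]$-type reasoning is awkward because $A$ is itself a function of $S$. Instead I would argue: deterministically, $|S| - \rank(S) = |S \cap \spn(A)| - \rank(S) \le |S \cap \spn(A)|$, which does not immediately close the loop either.

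So the cleaner route — and the one I expect actually works — is the classical "expose one coin at a time, bound the conditional extension probability from below by a quantity summing to $(1-1/e)r$" argument. Order the ground set arbitrarily. At step $i$, conditioned on everything exposed so far (hence on $A_i$), the rank increases by $1$ with probability at least $z_{e_i} \cdot \mathbf{1}[e_i \notin \spn(A_i)]$. Now sum $z_{e_i}\,\mathbf{1}[e_i \notin \spn(A_i)]$ over $i$: for a fixed run, $\sum_i z_{e_i}\mathbf{1}[e_i\in\spn(A_i)] \le \sum_i z_{e_i}\mathbf{1}[e_i\in\spn(A)] = z(\spn(A)) \le \rank(A)$. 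Therefore $\sum_i z_{e_i}\mathbf{1}[e_i\notin\spn(A_i)] \ge z(E) - \rank(A) = r - \rank(A)$. Combining, $\E[\rank(S)] = \sum_i \Pr[e_i \text{ extends}] \ge \E\big[\sum_i z_{e_i}\mathbf{1}[e_i\notin\spn(A_i)]\big] \ge r - \E[\rank(A)] = r - \E[\rank(S)]$, which only gives $\E[\rank(S)] \ge r/2$ — too weak.

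To get the $(1-1/e)$ factor I would refine the ordering argument into a continuous/telescoping one: fix the random order of exposure as well, or track the normalized rank deficit $r - \rank(R_i)$ as a supermartingale. The sharp statement is that after processing a $\lambda$-fraction of the "z-mass", the expected rank deficit is at most $r e^{-\lambda}$; processing all the mass ($\lambda = 1$, since $z(E)=r$... careful, the mass is $r$ not $1$, so one rescales) yields $\E[r - \rank(S)] \le r/e$. The main obstacle is precisely making this multiplicative-decrease bound rigorous: I need that conditioned on the current independent set $A_i$ with $\rank(A_i) = k$, the remaining un-spanned $z$-mass $\sum_{j \ge i} z_{e_j}\mathbf{1}[e_j \notin \spn(A_i)]$ is at least $r - k$ (this follows from $z \in \P_B(\M)$ applied to $\spn(A_i)$, as above), so that each unit of mass processed kills the deficit at a rate proportional to deficit-over-total, giving the exponential. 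I would formalize this by a clean induction on $|E|$ or by the standard "random permutation + $\prod(1 - 1/(r-k))$"-style estimate, and cite that $1-1/e$ is the resulting constant. That induction/telescoping step is where the real work lies; everything else is matroid bookkeeping.
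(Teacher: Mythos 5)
There is a genuine gap: your argument never actually establishes the $(1-1/e)$ factor. As you yourself note, the clean one-pass exposure argument only yields $\E[\rank(R(z))] \geq r/2$, and the refinement to an exponential-decay bound on the rank deficit is left as a sketch (``where the real work lies''). Worse, the specific inequality you propose to drive that refinement is not justified as stated: conditioned on the greedy independent set $A_i$ having rank $k$, the constraint $z \in \P_B(\M)$ gives $z(E \setminus \spn(A_i)) \geq r-k$, but this mass includes elements that were already exposed and \emph{rejected by the coin flips} (they are outside $\spn(A_i)$ precisely because they were not picked). The quantity you need, namely the un-spanned $z$-mass on the \emph{not-yet-exposed} elements $\sum_{j \geq i} z_{e_j}\mathbf{1}[e_j \notin \spn(A_i)]$, can be strictly smaller, so the claimed per-step multiplicative decrease of the deficit does not follow. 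Fixing this requires either a random exposure order with a careful conditioning argument, or the machinery behind the correlation-gap / contention-resolution results---it is not mere bookkeeping, and the constant $1-1/e$ is exactly what is at stake.

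By contrast, the paper sidesteps all of this by invoking the $(1,1-1/e)$-balanced contention resolution scheme of Chekuri, Vondr\'ak and Zenklusen as a black box: there is a (randomized) map $\pi_z$ with $\pi_z(R(z)) \sse R(z)$ independent and $\Pr[e \in \pi_z(R(z)) \mid e \in R(z)] \geq 1-1/e$ for every $e$ in the support of $z$, so $\E[\rank(R(z))] \geq \E[|\pi_z(R(z))|] \geq (1-1/e)\sum_e z_e = r(1-1/e)$ by linearity of expectation. If you want a self-contained proof along your lines, you would essentially be re-proving that CR-scheme guarantee (or the matroid correlation gap), which is a known but nontrivial result; citing it, as the paper does, is the intended route.
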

\begin{proof}
  We use the results of Chekuri et al.~\cite{CVZ11} (extending those of
  Chawla et al.~\cite{CHMS10}) on so-called contention resolution schemes.
  Formally, for a matroid $\M$, they give a randomized procedure $\pi_z$
  that takes the random set $R(z)$ and outputs an independent set
  $\pi_z(R(z))$ in $\M$, such that $\pi_z(R(z)) \sse R(z)$, and for each
  element $e$ in the support of $z$, $\Pr[ e \in \pi_z(R(z)) \mid e \in
  R(z) ] \geq (1-1/e)$. (They call this a $(1, 1-1/e)$-balanced CR
  scheme.) Now, we get
  \begin{align*}
    \E[ \rank(R(z)) ] &\geq \E[ \rank(\pi_z(R(z))) ] = \sum_{e \in \text{supp}(z)} \Pr[ e \in
    \pi_z(R(z)) ] \\
    & = \sum_{e \in \text{supp}(z)} \Pr[ e \in \pi_z(R(z)) \mid e \in R(z) ] \cdot \Pr[ e \in R(z) ]
    \\
    & \geq \sum_{e \in \text{supp}(z)} (1-1/e) \cdot z_e = r(1-1/e).
  \end{align*}
  The first inequality used the fact that $\pi_z(R(z))$ is a subset of
  $R(z)$, the following equality used that $\pi_z(R(z))$ is independent
  with probability~1, the second inequality used the property of the CR
  scheme, and the final equality used the fact that $z$ was a fractional
  base.
\end{proof}

\begin{theorem}
  \label{thm:lp-round}
  Any fractional solution can be randomly rounded to get solution to
  \MSM with cost $O(\log rT)$ times the fractional value, where $r$ is
  the rank of the matroid and $T$ the number of timesteps.
\end{theorem}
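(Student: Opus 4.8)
The plan is to solve the relaxation \eqref{eq:lp2}, obtain a fractional solution $(y,z)$ of objective value $\mathrm{LP}$, and round it to spanning sets $S_1,\dots,S_T$ with $\ex{\sum_t\big(c_t(S_t)+a(S_t\setminus S_{t-1})\big)}=O(\log rT)\cdot\mathrm{LP}$; by Lemma~\ref{lem:pack-cover} this also gives an \MMM solution of the same cost. Fix $K=\Theta(\log rT)$. For every $e\in E$ and index $i\in[K]$ draw an independent uniform $U_{e,i}\in[0,1]$, and at time $t$ set $R^i_t:=\{e: U_{e,i}\le z_t(e)\}$, so that $R^i_t$ has exactly the law of the random set $R(z_t)$ of Lemma~\ref{lem:alon}; then put $\hat S_t:=\bigcup_{i\in[K]}R^i_t$. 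The key design choice is that the \emph{same} variables $U_{e,i}$ are reused at every time step, so the sets at consecutive times are coupled.

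Two of the three cost bounds are immediate. Since $\pr{e\in\hat S_t}\le\sum_i\pr{U_{e,i}\le z_t(e)}=K\,z_t(e)$, the expected holding cost is at most $K\sum_{t,e}c_t(e)z_t(e)$. For the acquisition cost, $e\in\hat S_t\setminus\hat S_{t-1}$ forces $z_{t-1}(e)<U_{e,i}\le z_t(e)$ for some $i$, hence $\pr{e\in\hat S_t\setminus\hat S_{t-1}}\le K\,(z_t(e)-z_{t-1}(e))^+\le K\,y_t(e)$ by the LP constraint, and the expected acquisition cost is at most $K\sum_{t,e}a(e)y_t(e)$. Together these are $\le K\cdot\mathrm{LP}$.

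The real work is showing $\hat S_t$ spans with high probability. Fix $t$, reveal $R^1_t,\dots,R^K_t$ one at a time, and let $\mathrm{cor}_j:=r-\rank(R^1_t\cup\dots\cup R^j_t)$, so $\mathrm{cor}_0=r$. I claim $\ex{\mathrm{cor}_j\mid R^1_t,\dots,R^{j-1}_t}\le\tfrac1e\,\mathrm{cor}_{j-1}$. Condition on the partial union $S$ with $\rank(S)=\rho$; in the contracted matroid $\M':=\M/\spn(S)$, of rank $\ell=r-\rho$ and ground set $E'':=E\setminus\spn(S)$, one has $\mathrm{cor}_j=\ell-\rank_{\M'}(R^j_t\cap E'')$, and $R^j_t\cap E''$ has the law of $R(z_t|_{E''})$. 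Although $z_t|_{E''}$ need not lie in the matroid polytope of $\M'$, it does dominate a point of it: for every $V\sse E''$, $z_t(E''\setminus V)=r-z_t(V\cup\spn(S))\ge r-\rank_\M(V\cup\spn(S))=\ell-\rank_{\M'}(V)$, where the middle inequality is just the constraint $z_t(V\cup\spn(S))\le\rank_\M(V\cup\spn(S))$ valid for $z_t\in\P_B(\M)$; by a standard fact about matroid base polytopes these inequalities guarantee a $w\in\P_B(\M')$ with $w\le z_t|_{E''}$. Coupling $R(w)\sse R(z_t|_{E''})$ and using monotonicity of rank together with Lemma~\ref{lem:alon} applied to $\M'$ gives $\ex{\rank_{\M'}(R^j_t\cap E'')}\ge\ell(1-1/e)$, i.e.\ $\ex{\mathrm{cor}_j\mid\cdots}\le\ell/e$. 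Iterating, $\ex{\mathrm{cor}_K}\le r e^{-K}$, and since $\mathrm{cor}_K$ is a nonnegative integer, $\pr{\hat S_t\text{ not spanning}}\le r e^{-K}$.

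To finish, choose $K$ so that $T r e^{-K}\le\tfrac12$, still $K=\Theta(\log rT)$. With probability at least $\tfrac12$ all the $\hat S_t$ span simultaneously; if so, output them, and otherwise discard the sample and repeat with fresh randomness until success. The expected number of attempts is at most $2$, and since costs are nonnegative the expected cost of a successful attempt is at most $\ex{\text{cost of one attempt}}/\pr{\text{success}}\le 2K\cdot\mathrm{LP}=O(\log rT)\cdot\mathrm{LP}$, which bounds the expected cost of the output, proving the theorem. The one genuinely delicate point is the geometric decay of the corank, and inside it the observation that the restriction $z_t|_{E''}$ — even if it is infeasible for the contracted matroid $\M'$ — still dominates a fractional base of $\M'$, so that the contention‑resolution bound of Lemma~\ref{lem:alon} can be reapplied after contraction.
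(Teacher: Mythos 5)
Your proposal is correct and follows essentially the same route as the paper: threshold-style rounding with randomness shared across timesteps so that the holding and acquisition costs are bounded by $O(\log rT)$ times the LP value via the constraint $y_t(e)\ge z_t(e)-z_{t-1}(e)$, a proof that each $\widehat{S}_t$ spans with high probability by iterating Lemma~\ref{lem:alon} over contracted matroids, and a cheap fix for the low-probability failure event. The differences are minor: you track the expected corank (factor-$1/e$ decay per sample, then Markov) where the paper uses reverse Markov plus a Chernoff bound, you resample on failure where the paper augments $\widehat{S}_t$ to a full-rank set with cheapest elements, and you make explicit---via the covering description of $\P_{ss}$---that the restricted fractional point dominates a fractional base of the contracted matroid, a step the paper's proof of Lemma~\ref{lem:rand-round} leaves implicit.
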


\begin{proof}
  Set $L = 32 \log (rT)$. For each element $e \in E$, choose a random
  threshold $\tau_e$ independently and uniformly from the interval $[0,
  1/L]$. For each $t \in T$, define the set $\widehat{S}_t := \{ e \in E
  \mid z_t(e) \geq \tau_e\}$; if $\widehat{S}_t$ does not have full
  rank, augment its rank using the cheapest elements according to
  $(c_t(e) + a(e))$ to obtain a full rank set $S_t$. Since $\Pr[ e \in
  \widehat{S}_t ] = \min\{ L\cdot z_t(e), 1\}$, the cost
  $c_t(\widehat{S}_t) \leq L \times (c_t \cdot z_t)$. Moreover, $e \in
  \widehat{S}_t \setminus \widehat{S}_{t-1}$ exactly when $\tau_e$
  satisfies $z_{t-1}(e) < \tau_e \leq z_t(e)$, which happens with
  probability at most
  \[ \frac{\max( z_t(e) - z_{t-1}(e), 0)}{1/L} \le L\cdot y_t(e).\]
  Hence the expected acquisition cost for the elements newly added to
  $\widehat{S}_t$ is at most $L \times \sum_e (a(e) \cdot
  y_t(e))$. Finally, we have to account for any elements added to extend
  $\widehat{S}_t$ to a full-rank set $S_t$.

  \begin{lemma}
    \label{lem:rand-round}
    For any fixed $t \in [T]$, the set $\widehat{S}_t$ contains a basis of
    $\M$ with probability at least $1 - 1/(rT)^8$.
  \end{lemma}

  \begin{proof}
    The set $\widehat{S}_t$ is obtained by threshold rounding of the
    fractional base $z_t \in \P_B(\M)$ as above. Instead, consider
    taking $L$ different samples $T^{(1)}, T^{(2)}, \ldots, T^{(L)}$,
    where each sample is obtained by including each element $e \in E$
    independently with probability $z_t(e)$; let $T := \cup_{i = 1}^L
    T^{(i)}$. It is easy to check that $\Pr[ \rank(T) = r ] \leq \Pr[
    \rank(\widehat{S}_t) = r]$, so it suffices to give a lower bound on
    the former expression. For this, we use Lemma~\ref{lem:alon}: the
    sample $T^{(1)}$ has expected rank $r(1-1/e)$, and using reverse
    Markov, it has rank at least $r/2$ with probability at least $1-2/e
    \geq 1/4$. Now focusing on the matroid $\M'$ obtained by contracting
    elements in $\spn(T^{(1)})$ (which, say, has rank $r'$), the same
    argument says the set $T^{(2)}$ has rank $r'/2$ with probability at
    least $1/4$, etc. Proceeding in this way, the probability that the
    rank of $T$ is less than $r$ is at most the probability that we see
    fewer than $\log_2 r$ heads in $L = 32\log rT$ flips of a
    coin of bias $1/4$. By a Chernoff bound, this is at
    most $\exp\{-(7/8)^2\cdot (L/4)/3\} = 1/(rT)^8$.
  \end{proof}

  Now if the set $\widehat{S}_t$ does not have full rank, the elements we
  add have cost at most that of the min-cost base under the cost
  function $(a_e + c_t(e))$, which is at most the optimum value
  for~(\ref{eq:lp2}). (We use the fact that the LP is exact for a single
  matroid, and the global LP has cost at least the single timestep
  cost.) This happens with probability at most $1/(rT)^8$, and hence the
  total expected cost of augmenting $\widehat{S}_t$ over all $T$
  timesteps is at most $O(1)$ times the LP value. This proves the main
  theorem.
\end{proof}

Again, this algorithm for \MSM works with different matroids at each
timestep, and also for intersections of matroids. To see this observe that the only requirements from the algorithm are that there is a separation oracle for the polytope and that the contention resolution scheme works. In the case of $k-$matroid intersection, if we pay an extra $O(\log k)$ penalty in the approximation ratio we have that the probability a rounded solution does not contain a base is $< 1/k$ so we can take a union bound over the multiple matroids.

\if 0

\subsubsection{An Improvement: Avoiding the Dependence on $T$}
\label{sec:just-logr}

\agnote{Can we do this for the $a(e) \in \{0,1\}$ case?} \agnote{We need
  to move the appropriate theorems up here from the online setting.}

We can replace the dependence on $T$ by a term that depends only on the
variance in the acquisition costs. Let us divide the period $1\ldots T$
into ``epochs'', where an epoch is an interval $[p,q)$ for $p \leq q$
such that the total fractional acquisition cost $\sum_{t = p}^{q-1}
\sum_e y_t(e) = \Theta(r)$. We can afford to build a brand new tree at
the beginning of each epoch and incur an acquisition cost of at most the
rank $r$, which we can charge to the LP's fractional acquisition cost in
the epoch. By Theorem~\ref{thm:lp-round}, naively applying the rounding
algorithm to each epoch independently gives a guarantee of $O(\log
rT')$, where $T'$ is the maximum length of an epoch.

Now we should be able to use the argument from the online section that
says that we can ignore steps where the total movement is smaller than
half. Thus $T'$ can be assumed to be $O(r)$. More details to be added
once we consistentize notation. In fact, if we define epoch to be a
period of acquisition cost $\Theta(r\, a_{max})$, then the at least half
means movement cost at least $a_{min}/2$. Thus the epoch only has
$\Theta(r\, a_{max}/a_{min})$ relevant steps in it, so we get log of that.

For the special case where all the acquisition costs $a(e)$ are all the
same, this implies we get rid of the $T$ term in the LP rounding, and
get an $O(\log r)$-approximation.

\ktnote{I propose that we move the improvement to after the online section, and simply put a forward pointer here. Basically, the lemma needs the other formulation of the polytope, and it seems somewhat out of place here. If we do a full re-org, we can move that stuff to "Structural results" before either of the two sections.}
\fi

\ifconf
\myspace{-0.1in}
\subsubsection*{An Improvement: Avoiding the Dependence on $T$.}
When the ratio of the maximum to the minimum acquisition cost is small,
we can improve the approximation factor above. More specifically, we
show that essentially the same randomized rounding algorithm (with a
different choice of $L$) gives an approximation ratio of
$\log\frac{ra_{max}}{a_{min}}$. We defer the argument to
Section~\ref{sec:just-logr}, as it needs some additional definitions and
results that we present in the online section.
\else
\paragraph{An Improvement: Avoiding the Dependence on $T$.}
When the ratio of the maximum to the minimum acquisition cost is small,
we can improve the approximation factor above. More specifically, we
show that essentially the same randomized rounding algorithm (with a
different choice of $L$) gives an approximation ratio of
$\log\frac{ra_{max}}{a_{min}}$. We defer the argument to
Section~\ref{sec:just-logr}, as it needs some additional definitions and
results that we present in the online section.
\fi

\ifconf
\myspace{-0.1in}
\subsubsection*{Hardness for Offline \MSM.} We defer the hardness proof to
Appendix~\ref{app:offline}, which shows that the \MSM and \MMM problems
are NP-hard to approximate better than $\Omega(\min\{ \log r, \log T
\})$ even for graphical matroids. An integrality gap of $\Omega(\log T)$
appears in Appendix~\ref{sec:int-gap-matroids}.

\else
\subsection{Hardness for Offline \MSM}
\label{sec:hardness-offline}

\begin{theorem}
  \label{thm:matr-hard}
  The \MSM and \MMM problems are NP-hard to approximate better than
  $\Omega(\min\{ \log r, \log T \})$ even for graphical matroids.
\end{theorem}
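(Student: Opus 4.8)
The target is a logarithmic hardness of approximation for \MSM/\MMM restricted to graphical matroids. Since we want a bound of the form $\Omega(\min\{\log r,\log T\})$, the natural route is a reduction from \textsc{Set Cover}, whose approximation is NP-hard below $(1-o(1))\ln n$ by the work of Dinur--Steurer (or from an appropriate PCP/\textsc{Min-Rep}-style instance). The key is to encode "covering the universe" by "having a spanning tree" and to encode "a set is chosen" by "the corresponding edges are cheaply available at some time step," so that the acquisition cost paid is (up to scaling) the number of sets picked.

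\textbf{The construction I would use.} Given a \textsc{Set Cover} instance with universe $U=\{u_1,\dots,u_n\}$ and sets $F_1,\dots,F_M\subseteq U$, build a graph $G$ as follows. Take a central vertex $v_0$ and a vertex $v_j$ for each set $F_j$; include a "set edge" $e_j=\{v_0,v_j\}$ with acquisition cost $a(e_j)=1$. For each universe element $u_i$ add a vertex $w_i$, and for each set $F_j\ni u_i$ add an "element edge" $f_{ij}=\{v_j,w_i\}$ with acquisition cost $a(f_{ij})=0$. All these edges have holding cost $0$ at every timestep (using the interval view: they are available throughout). Now create $T=n$ timesteps; at timestep $i$ we force $w_i$ to be the only "live" leaf that must be connected — concretely, make all element edges $f_{i'j}$ for $i'\ne i$ have holding cost $+\infty$ at time $i$ (equivalently, not available), while $f_{ij}$ is available. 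So at time $i$ a spanning set of the live sub-matroid must connect $w_i$ to $v_0$, which forces some set edge $e_j$ with $u_i\in F_j$ to be in $S_i$. A solution of total acquisition cost $k$ therefore corresponds to choosing $k$ set edges that hit every $u_i$, i.e.\ a set cover of size $k$; conversely any set cover of size $k$ yields a solution keeping those $k$ set edges (and the needed element edge) permanently, with total cost $O(k)$. Hence an $\alpha$-approximation for \MSM yields an $O(\alpha)$-approximation for \textsc{Set Cover}, giving NP-hardness below $c\log n$. Since in this instance $r=\Theta(n+M)$ and $T=n$, this is $\Omega(\min\{\log r,\log T\})$ as claimed; by Lemma~\ref{lem:pack-cover} the same bound holds for \MMM.

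\textbf{Obstacle and refinements.} The main obstacle is \emph{keeping the matroid graphical} while still forcing the covering behavior: the clean way to "turn off" an edge at a timestep is via an $\infty$ holding cost, which is allowed by the \MSM definition (the interval/$\{0,\infty\}$ view is already set up in the excerpt), so this is not really an obstruction but needs to be stated carefully — I would phrase the reduction directly in the interval model from Section~\ref{sec:intervals}, giving each element edge $f_{ij}$ the single-timestep interval $[i,i]$ (a fresh parallel copy for each live time) and each set edge $e_j$ the full interval $[1,T]$. A secondary subtlety is that a spanning set of $G$ at time $i$ must span \emph{all} of $G$, not just the live leaf; this is harmless because the non-live vertices $w_{i'}$ can be spanned by their (zero-acquisition-cost, always-available) pendant structure — alternatively one restructures $G$ so that only $v_0$, the $v_j$'s, and the current $w_i$ are present at time $i$, which is cleanest. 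The last thing to verify is the $\Omega(\log T)$ side with $r=O(1)$-ish matroids, which the theorem also implicitly wants: here one uses instead a constant-rank (e.g.\ graphical matroid of a small graph) but polynomially many timesteps, reusing the same hitting-set structure over $T$ steps so that the hardness parameter becomes $\log T$ rather than $\log r$; I would present the $\log r$ bound in full and remark that the symmetric construction with roles of $r$ and $T$ swapped gives the $\log T$ bound, yielding the stated $\Omega(\min\{\log r,\log T\})$.
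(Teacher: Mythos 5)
Your high-level plan (reduce from Set Cover, make unit-acquisition ``set edges'' at a root the only way to pay, and use per-timestep free edges to encode which universe element is being tested) is the same as the paper's, but the concrete gadget you propose does not work, and the fixes you sketch do not repair it. The matroid in \MSM is the graphical matroid of one fixed graph $G$, so at \emph{every} timestep the chosen set $S_t$ must have full rank, i.e.\ the finite-cost edges you use at time $i$ must span all of $G$. In your instance, at time $i$ every element vertex $w_{i'}$ with $i'\neq i$ has no available incident edge (all its edges $f_{i'j}$ live only in $[i',i']$), so no finite-cost solution exists at all; the ``always-available pendant structure'' you invoke contradicts exactly the time-restriction your forcing needs (an always-available zero-cost edge at $w_{i'}$ either fails to connect $w_{i'}$ to the rest of $G$, or connects it to $v_0$ and then kills the forcing at time $i'$), and ``restructuring $G$ so that only $v_0$, the $v_j$'s and the current $w_i$ are present'' is not an operation the model allows, since the graph is fixed and only costs vary. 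Worse, there is a second and more damaging failure you did not notice: at time $i$, every set vertex $v_j$ with $u_i\notin F_j$ has exactly one available incident edge, namely $e_j=(v_0,v_j)$, so $e_j$ is forced into $S_i$. Hence essentially all $M$ set edges must be bought regardless of the optimal cover, the yes/no instances have nearly the same cost, and the reduction produces no gap.

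The paper's construction avoids both problems by dispensing with element vertices altogether and, at timestep $j$, making the free single-step edges form a clique on the set vertices of $F_j$ \emph{and} a clique on $\{r\}\cup\overline{F_j}$: everything outside $F_j$ is then connected to the root for free, the $F_j$-vertices are internally connected for free, and the only edges crossing the cut between $F_j$ and the rest are the permanent unit-cost edges $(r,s_i)$ with $s_i\in F_j$, so exactly the set-cover structure is forced. You could repair your gadget in the same spirit by adding, at time $i$, zero-cost single-step edges from $v_0$ to every $w_{i'}$ with $i'\neq i$ and to every $v_j$ with $u_i\notin F_j$ — but that is essentially the paper's proof. Finally, your closing remark about proving the $\log T$ side by a ``symmetric construction with the roles of $r$ and $T$ swapped'' is unnecessary and confused: a single family of instances with gap $\Omega(\log n)$ in which $T=n$ and $r=\poly(n)$ (as in the paper) already yields the stated $\Omega(\min\{\log r,\log T\})$ hardness, and a construction with constant rank could never certify more than an $O(1)$ value of that minimum.
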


\begin{proof}
  We give a reduction from Set Cover to the \MSM problem for graphical
  matroids. Given an instance $(U, \F)$ of set cover, with $m = |\F|$
  sets and $n = |U|$ elements, we construct a graph as follows. There is
  a special vertex $r$, and $m$ set vertices (with vertices $s_i$ for
  each set $S_i \in \F$). There are $m$ edges $e_i := (r, s_i)$ which
  all have inclusion weight $a(e_i) = 1$ and per-time cost $c_t(e) = 0$
  for all $t$. All other edges will be zero cost short-term edges as
  given below.  In particular, there are $T = n$ timesteps. In timestep
  $j \in [n]$, define subset $F_j := \{ s_i \mid S_i \ni u_j\}$ to be
  vertices corresponding to sets containing element $u_j$. We have a set
  of edges $(e_i, e_{i'})$ for all $i,i' \in F_j$, and all edges $(x,y)$
  for $x, y \in \{r\} \cup \overline{F_j}$. All these edges have zero
  inclusion weight $a(e)$, and are only alive at time $j$. (Note this
  creates a graph with parallel edges, but this can be easily fixed by
  subdividing edges.)

  In any solution to this problem, to connect the vertices in $F_j$ to
  $r$, we must buy some edge $(r, s_i)$ for some $s_i \in F_j$. This is
  true for all $j$, hence the root-set edges we buy correspond to a set
  cover. Moreover, one can easily check that if we acquire edges $(r,s_i)$ such that the sets $\{S_i: (r,s_i)\mbox{ acquired}\}$ form a set cover, then we can always augment using zero cost edges to get a spanning tree. Since the only edges we pay for are the $(r,s_i)$ edges, we should buy
  edges corresponding to a min-cardinality set cover, which is hard to
  approximate better than $\Omega(\log n)$. Finally, that the number of
  time periods is $T = n$, and the rank of the matroid is $m = \poly(n)$
  for these hard instances. This gives us the claimed hardness.
\end{proof}
\fi



\newcommand{\xt}{\widetilde{\x}}
\newcommand{\ALP}{\mathcal{A}_{onLP}}

\myspace{-0.2in}
\section{Online \MSM}
\label{sec:online}
\myspace{-0.17in}

We now turn to solving \MMM in the online setting. In this setting, the
acquisition costs $a(e)$ are known up-front, but the holding costs
$c_t(e)$ for day $t$ are not known before day $t$. Since the equivalence
given in Lemma~\ref{lem:pack-cover} between \MMM and \MSM holds even in
the online setting, we can just work on the \MSM problem. We show that
the online \MSM problem admits an $O(\log |E| \log rT)$-competitive
(oblivious) randomized algorithm. To do this, we show that one can find
an $O(\log |E|)$-competitive fractional solution to the linear programming
relaxation in Section~\ref{sec:offline}, and then we round this LP
relaxation online, losing another logarithmic factor.
\myspace{-0.2in}

\subsection{Solving the LP Relaxations Online}
\label{sec:lp-online}
\myspace{-0.15in}

Again, we work in the interval model outlined in
Section~\ref{sec:intervals}. Recall that in this model, for each element
$e$ there is a unique interval $I_e \sse [T]$ during which it is alive.
The element $e$ has an acquisition cost $a(e)$, no holding costs. Once
an element has been acquired (which can be done at any time during its
interval), it can be used at all times in that interval, but not after
that. In the online setting, at each time step $t$ we are told which
intervals have ended (and which have not); also, which new elements $e$
are available starting at time $t$, along with their acquisition costs
$a(e)$. Of course, we do not know when its interval $I_e$ will end; this
information is known only once the interval ends.

We will work with the same LP as in Section~\ref{sec:lp-round}, albeit
now we have to solve it online. The variable $x_e$ is the indicator for
whether we acquire element~$e$.
\begin{align}
  P := \min \ts \sum_e a(e) \cdot x_e & \tag{LP3} \label{eq:3} \\
  \text{s.t.~~~} z_{et} &\in \P_B(\M) \qquad\qquad\qquad \forall t \notag\\
    z_{et} &\leq x_e \qquad\qquad\qquad\forall e, t \in I_e \notag\\
    x_e, z_{et} &\in [0,1] \notag
\end{align}
Note that this is not a packing or covering LP, which makes it more
annoying to solve online. Hence we consider a slight reformulation.  Let
$\P_{ss}(\M)$ denote the \emph{spanning set polytope} defined as the
convex hull of the full-rank (a.k.a.\ spanning) sets $\{ \chi_S \mid S
\sse E, \rank(S) = r\}$. Since each spanning set contains a base, we can
write the constraints of~(\ref{eq:3}) as:
\begin{align}
\x_{E_t} &\in \P_{ss}(\M) \qquad\qquad \forall t, {\mbox{ where }}  E_t =
   \{e: t \in I_e\}. \label{eq:4}
\end{align}
Here we define $\x_S$ to be the vector derived from $\x$ by zeroing out
the $x_e$ values for $e \not\in S$. It is known that the polytope
$\P_{ss}(\M)$ can be written as a (rather large) set of covering
constraints. Indeed, $\x \in \P_{ss}(\M) \iff (\mathbf{1}-\x) \in
\P_I(\M^*)$, where $\M^*$ is the dual matroid for $\M$. Since the rank
function of $M^*$ is given by $r^*(S) = r(E\setminus S) + |S| - r(E)$,
it follows that~(\ref{eq:4}) can be written as
\begin{align}
\ts \sum_{e\in S} x_e &\geq r(E) - r(E\setminus S)\qquad\qquad\qquad \forall
t, \forall S \subseteq E_t \tag{LP4} \label{eq:coveringconstraints}\\
x_e &\geq 0\qquad\qquad \forall e \in E\notag\\
x_e &\leq 1\qquad\qquad \forall e \in E\notag.
\end{align}
Thus we get a covering LP with ``box'' constraints over $E$. The
constraints can be presented one at a time: in timestep $t$, we present
all the covering constraints corresponding to $E_t$. We remark that the newer machinery of~\cite{BuchbinderCN14} may be applicable to \ref{eq:coveringconstraints}. We next show that a simpler approach suffices\footnote{Additionally, Lemma~\ref{lem:farfromfeasible} will be useful in improving the rounding algorithm.}. The general results
of Buchbinder and Naor~\cite{BN-MOR} (and its extension to row-sparse
covering problems by~\cite{GN12-online}) imply a deterministic algorithm
for fractionally solving this linear program online, with a competitive
ratio of $O(\log |E|) = O(\log m)$. However, this is not yet a
polynomial-time algorithm, the number of constraints for each timestep
being exponential.  We next give an adaptive algorithm to generate a
small yet sufficient set of constraints.
\myspace{-0.2in}
\ifconf
\subsubsection*{Solving the LP Online in Polynomial Time.}
\else
\paragraph{Solving the LP Online in Polynomial Time.}
\fi
Given a vector $\x \in [0,1]^E$, define $\xt$ as follows:
\begin{align}
\tilde{x}_e = \min(2\,x_e,1) \qquad \forall e \in E. \label{eq:5}
\end{align}
Clearly, $\xt \leq 2\x$ and $\xt \in [0,1]^E$. We next describe the
algorithm for generating covering constraints in timestep $t$. Recall
that~\cite{BN-MOR} give us an online algorithm $\ALP$ for solving a
fractional covering LP with box constraints; we use this as a
black-box. (This LP solver only raises variables, a fact we will use.)
In timestep $t$, we adaptively select a small subset of the covering
constraints from (\ref{eq:coveringconstraints}), and present it to
$\ALP$. Moreover, given a fractional solution returned by $\ALP$, we
will need to massage it at the end of timestep $t$ to get a solution
satisfying all the constraints from (\ref{eq:coveringconstraints})
corresponding to $t$.

Let $\x$ be the fractional solution to~(\ref{eq:coveringconstraints}) at
the end of timestep $t-1$. Now given information about timestep $t$, in
particular the elements in $E_t$ and their acquisition costs, we do the
following. Given $\x$, we construct $\xt$ and check if $\xt_{E_t} \in
\P_{ss}(M)$, as one can separate for $\P_{ss}(\M)$. If $\xt_{E_t} \in
\P_{ss}(M)$, then $\xt$ is feasible and we do not need to present any
new constraints to $\ALP$, and we return $\xt$. If not, our separation
oracle presents an $S$ such that the constraint $\sum_{e\in S}
\widetilde{x}_e \geq r(E) - r(E\setminus S)$ is violated. We present the
constraint corresponding to $S$ to $\ALP$ to get an updated $\x$, and
repeat until $\xt$ is feasible for time $t$. (Since $\ALP$ only raises
variables and we have a covering LP, the solution remains feasible for
past timesteps.) We next argue that we do not need to repeat this loop
more than $2n$ times.

\begin{lemma}
  \label{lem:farfromfeasible}
  If for some $\x$ and the corresponding $\xt$, the constraint
  $\sum_{e\in S} \widetilde{x}_e \geq r(E) - r(E\setminus S)$ is
  violated. Then
\begin{gather*}
  \ts \sum_{e\in S} x_e \leq r(E) - r(E\setminus S) - \frac{1}{2}
\end{gather*}
\end{lemma}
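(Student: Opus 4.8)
The plan is to exploit two sources of integrality: the right-hand side $r(E)-r(E\setminus S)$ is an integer (a difference of matroid ranks), and the rounded vector $\xt$ is pinned to the integer value $1$ on exactly those coordinates where it differs from $2\x$. Write $k := r(E) - r(E\setminus S)$. Since $\widetilde{x}_e \ge 0$ for all $e$, the sum $\sum_{e\in S}\widetilde{x}_e$ is nonnegative; as it is strictly less than $k$ by hypothesis, we must have $k \ge 1$.

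The first step is to split $S$ according to whether the box constraint becomes active after doubling. Let $S_1 := \{ e \in S : 2x_e \ge 1 \}$, so that $\widetilde{x}_e = 1$ for every $e \in S_1$, and let $S_2 := S \setminus S_1$, so that $\widetilde{x}_e = 2x_e$ for every $e \in S_2$. The assumed violation then reads
\[
  |S_1| + 2\ts\sum_{e\in S_2} x_e \;=\; \ts\sum_{e\in S}\widetilde{x}_e \;<\; k .
\]
In particular $|S_1| < k$, and since both $|S_1|$ and $k$ are integers, $|S_1| \le k-1$.

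The second step is to bound $\sum_{e\in S} x_e$ using the box constraint $x_e \le 1$ on $S_1$ together with the inequality above on $S_2$:
\[
  \ts\sum_{e\in S} x_e \;\le\; |S_1| + \ts\sum_{e\in S_2} x_e \;<\; |S_1| + \tfrac12\big(k - |S_1|\big) \;=\; \tfrac12\big(k + |S_1|\big) \;\le\; \tfrac12\big(k + (k-1)\big) \;=\; k - \tfrac12 ,
\]
which is exactly the claimed bound (in fact with strict inequality, giving a little slack).

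I do not expect a genuine obstacle; the only subtlety is to handle the box constraints $x_e \le 1$ correctly. Without them one would have $\widetilde{x}_e = 2x_e$ everywhere and the statement would be essentially immediate from $2\sum_{e\in S} x_e < k$ and $k \ge 1$; with them, one must separately account for the saturated coordinates $S_1$, where the loss from replacing $\widetilde{x}_e$ by $x_e$ can be up to $1$ per element — but the point is that $|S_1|$ is itself at most $k-1$, so the two effects combine to leave a slack of $\tfrac12$. This single-violated-constraint statement is precisely what realizes the ``$\x$ is $\tfrac12$ away in $\ell_1$ from feasibility'' intuition used to bound the number of constraints fed to $\ALP$.
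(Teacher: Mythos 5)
Your proof is correct and takes essentially the same route as the paper's: split $S$ into the coordinates where $\widetilde{x}_e$ is capped at $1$ and the rest, then combine the integrality of $|S_1|$ and of $r(E)-r(E\setminus S)$ with the box constraint $x_e \le 1$. Your bookkeeping via $|S_1|\le k-1$ is marginally cleaner than the paper's $\lceil\gamma\rceil - \gamma/2 \ge \tfrac12$ step (which as written assumes $\gamma>0$), but the underlying argument is the same.
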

\begin{proof}
  Let $S_1 = \{e \in S : \widetilde{x}_e = 1\}$ and let $S_2 =
  S\setminus S_1$. Let $\gamma$ denote $\sum_{e\in S_2}
  \widetilde{x}_e$. Thus
  \begin{align*}
    \ts |S_1| = \sum_{e\in S} \widetilde{x}_e -  \sum_{e\in S_2} \widetilde{x}_e
    < r(E) - r(E\setminus S) - \gamma
  \end{align*}
  Since both $|S_1|$ and $r(E) - r(E\setminus S)$ are integers, it
  follows that $|S_1| \leq r(E) - r(E\setminus S) - \ceil{\gamma}$.  On
  the other hand, for every $e \in S_2, x_e = \frac{1}{2}\cdot
  \widetilde{x}_e$, and thus $\sum_{e \in S_2} x_e =
  \frac{\gamma}{2}$. Consequently
  \begin{align*}
    \ts \sum_{e \in S} x_e  &= \ts \sum_{e \in S_1} x_e  + \sum_{e \in S_2} x_e
    = |S_1| + \frac{\gamma}{2}\\
    &\le \ts r(E) - r(E\setminus S) -  \ceil{\gamma} + \frac{\gamma}{2}.
  \end{align*}
  Finally, for any $\gamma >0$, $\ceil{\gamma}-\frac{\gamma}{2} \geq
  \frac{1}{2}$, so the claim follows.
\end{proof}

The algorithm $\ALP$ updates $\x$ to satisfy the constraint given to it,
and Lemma~\ref{lem:farfromfeasible} implies that each constraint we give
to it must increase $\sum_{e \in E_t} x_e$ by at least
$\frac{1}{2}$. The translation to the interval model ensures that the
number of elements whose intervals contain $t$ is at most $|E_t| \leq
|E| = m$, and hence the total number of constraints presented at any
time $t$ is at most $2m$. We summarize the discussion of this section in
the following theorem.
\begin{theorem}
  \label{thm:lp-solve}
  There is a polynomial-time online algorithm to compute an $O(\log
  |E|)$-approximate solution to~(\ref{eq:3}).
\end{theorem}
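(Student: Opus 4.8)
The plan is to assemble the pieces that have already been developed in Section~\ref{sec:lp-online} into a single polynomial-time online procedure and verify its competitive ratio. The algorithm is exactly the one sketched above: maintain a fractional vector $\x$, and in each timestep $t$ feed a small set of covering constraints from~(\ref{eq:coveringconstraints}) to the Buchbinder--Naor online LP solver $\ALP$, until the scaled vector $\xt$ (defined by $\tilde{x}_e = \min(2x_e,1)$) satisfies $\xt_{E_t} \in \P_{ss}(\M)$; output $\xt$ as the fractional solution for time $t$. The correctness of the output is immediate: by construction the loop only terminates once $\xt_{E_t} \in \P_{ss}(\M)$, so via the equivalence $\x \in \P_{ss}(\M) \iff (\mathbf{1}-\x) \in \P_I(\M^*)$ the vector $\xt$ is feasible for~(\ref{eq:3}) at time $t$; and since $\ALP$ never decreases a variable and the constraints are covering, feasibility at earlier timesteps is preserved.

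Next I would bound the per-step running time. The separation oracle for $\P_{ss}(\M)$ (equivalently, rank-oracle computation for $\M^*$) runs in polynomial time, and each time it returns a violated set $S$, Lemma~\ref{lem:farfromfeasible} tells us that $\sum_{e\in S} x_e \leq r(E) - r(E\setminus S) - \tfrac12$, i.e.\ the constraint is violated by the current $\x$ by at least $\tfrac12$; after $\ALP$ satisfies this constraint, $\sum_{e \in E_t} x_e$ has increased by at least $\tfrac12$. In the interval model (Section~\ref{sec:intervals}) the number of elements alive at time $t$ is at most $|E_t| \leq |E| = m$, so $\sum_{e\in E_t} x_e \leq m$ throughout, and hence at most $2m$ constraints are ever presented in timestep $t$. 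Thus each timestep invokes $\ALP$ a polynomial number of times on polynomial-size data, and the whole algorithm runs in polynomial time.

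Finally I would bound the competitive ratio. The set of constraints actually presented to $\ALP$ over the whole run is a \emph{subset} of the constraints of the true LP~(\ref{eq:coveringconstraints}), which is a covering LP with box constraints over $m = |E|$ variables; by the Buchbinder--Naor result~\cite{BN-MOR} (or its row-sparse refinement~\cite{GN12-online}), $\ALP$ maintains a fractional solution $\x$ whose objective $\sum_e a(e) x_e$ is $O(\log m)$ times the optimum of the LP restricted to the presented constraints, which is at most the optimum $P$ of the full LP~(\ref{eq:3}). Since the output is $\xt \leq 2\x$, its cost is at most $2 \cdot O(\log m) \cdot P = O(\log |E|) \cdot P$, giving the claimed $O(\log|E|)$-approximation. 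The main obstacle — and really the only non-formulaic step — is the accounting that keeps the number of presented constraints polynomial while ensuring the restricted LP optimum stays bounded by $P$; this is precisely what Lemma~\ref{lem:farfromfeasible} and the interval-model sparsity bound $|E_t| \leq m$ deliver, so the theorem follows by stitching these together.
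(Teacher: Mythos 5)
Your proposal is correct and follows essentially the same route as the paper: reformulate as the covering LP~(\ref{eq:coveringconstraints}), feed adaptively chosen violated constraints for the scaled vector $\xt$ to the Buchbinder--Naor solver, use Lemma~\ref{lem:farfromfeasible} to bound the number of constraints per timestep by $2m$, and pay a factor of $2$ for outputting $\xt \leq 2\x$ on top of the $O(\log m)$ competitiveness of $\ALP$. Nothing essential is missing.
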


We observe that the solution to this linear program can be trivially transformed to one for the LP in Section~\ref{sec:lp-round}. Finally, the randomized rounding algorithm of Section~\ref{sec:lp-round}
can be implemented online by selecting a threshold $t_e \in [0,1/L]$ the
beginning of the algorithm, where $L = \Theta(\log rT)$ and selecting
element $e$ whenever $\widetilde{x}_e$ exceeds $t_e$: here we use the fact that the online algorithm only ever raises $x_e$ values, and this rounding algorithm is monotone. Rerandomizing in
case of failure gives us an expected cost of $O(\log rT)$ times the LP
solution, and hence we get an $O(\log m \log rT)$-competitive algorithm.

\ifconf
\myspace{-0.2in}

\subsubsection*{An $O(\log r\frac{a_{max}}{a_{min}})$-Approximate Rounding}
The dependence on the time horizon $T$ is unsatisfactory in some
settings, but we can do better using Lemma~\ref{lem:farfromfeasible}.
Recall that the $\log (rT)$-factor loss in the rounding follows from the
naive union bound over the $T$ time steps. We can argue that when $
\frac{a_{max}}{a_{min}}$ is small, we can afford for the rounding to
fail occasionally, and charge it to the acquisition cost incurred by the
linear program. The details appear in Appendix~\ref{sec:just-logr}.

\else
\myspace{-0.1in}

\subsection{An $O(\log r\frac{a_{max}}{a_{min}})$-Approximate Rounding}
\label{sec:just-logr}
\myspace{-0.1in}

The dependence on the time horizon $T$ is unsatisfactory in some
settings, but we can do better using Lemma~\ref{lem:farfromfeasible}.
Recall that the $\log (rT)$-factor loss in the rounding follows from the
naive union bound over the $T$ time steps. We now argue that when $
\frac{a_{max}}{a_{min}}$ is small, we can afford for the rounding to
fail occasionally, and charge it to the acquisition cost incurred by the
linear program.

Let us divide the period $[1\ldots T]$ into disjoint ``epochs'', where
an epoch (except for the last) is an interval $[p,q)$ for $p \leq q$
such that the total fractional acquisition cost $\sum_{t = p}^{q-1}
\sum_e a(e) \cdot y_t(e) \geq r\cdot a_{max} > \sum_{t = p}^{q-2} \sum_e
a(e) \cdot y_t(e)$. Thus an epoch is a minimal interval where the linear
program spends acquisition cost $\in [r\cdot a_{max}, 2r \cdot a_{\max}]$,
so that we can afford to build a brand new tree once in each epoch and
can charge it to the LP's fractional acquisition cost in the
epoch. Naively applying Theorem~\ref{thm:lp-round} to each epoch
independently gives us a guarantee of $O(\log rT')$, where $T'$ is the
maximum length of an epoch.

However, an epoch can be fairly long if the LP solution changes very
slowly. We break up each epoch into phases, where each phase is a
maximal subsequence such that the LP incurs acquisition cost at most
$\frac{a_{min}}{4}$; clearly the epoch can be divided into at most
$R:= \frac{8ra_{max}}{a_{min}}$ disjoint phases. For a phase $[t_1,t_2]$,
let $Z_{[t_1,t_2]}$ denote the solution defined as $Z_{[t_1,t_2]}(e) =
\min_{t\in [t_1,t_2]} z_t(e)$. The definition of the phase implies that
for any $t\in [t_1,t_2]$, the $L_1$ difference $\|
Z_{[t_1,t_2]} - z_t\|_1 \leq \frac{1}{4}$. Now Lemma~\ref{lem:farfromfeasible}
implies that $\widetilde{Z}_{[t_1,t_2]}$ is in $\P_{ss}(M)$, where
$\widetilde{Z}$ is defined as in~(\ref{eq:5}).

Suppose that in the randomized rounding algorithm, we pick the threshold
$t_e \in [0,1/L']$ for $L' = 64 \log R$. Let $\mathcal{G}_{[t_1,t_2]}$
be the event that the rounding algorithm applied to $Z_{[t_1,t_2]}$
gives a spanning set. Since $\widetilde{Z}_{[t_1,t_2]} \leq
2Z_{[t_1,t_2]}$ is in $\P_{D}(M)$ for a phase $[t_1,t_2]$,
Lemma~\ref{lem:rand-round} implies that the event
$\mathcal{G}_{[t_1,t_2]}$ occurs with probability $1-1/R^8$. Moreover,
if $\mathcal{G}_{[t_1,t_2]}$ occurs, it is easy to see that the
randomized rounding solution is feasible for all $t\in[t_1,t_2]$. Since
there are $R$ phases within an epoch, the expected number of times that
the randomized rounding fails any time during an epoch is $R \cdot 1/R^8
= R^{-7}$.

Suppose that we rerandomize all thresholds whenever the randomized
rounding fails. Each rerandomization will cost us at most $r a_{max}$ in
expected acquisition cost. Since the expected number of times we do this
is less than once per epoch, we can charge this additional cost to the
$r a_{max}$ acquisition cost incurred by the LP during the epoch. Thus
we get an $O(\log R) = O(\log \frac{r\,a_{max}}{a_{min}})$-approximation.
This argument also works for the online case; hence for the common case
where all the acquisition costs are the same, the loss due to randomized
rounding is $O(\log r)$.

\fi

%
%
%
%
%

\ifconf
\myspace{-0.2in}
\subsubsection*{Hardness of the online \MMM and online \MSM.} We can show
that any polynomial-time algorithm cannot achieve better than an
$\Omega(\log m \log T)$ competitive ratio, via a reduction from online
set cover. Details appear in Appendix~\ref{app:sec:hardness-online}.

\else

\subsection{Hardness of the online \MMM and online \MSM}\label{sec:hardness-online}
In the online set cover problem, one is given an instance $(U,\F)$ of set cover, and in  time step $t$, the algorithm is presented an element $u_t \in U$, and is required to pick a set covering it. The competitive ratio of an algorithm on a sequence $\{u_t\}_{t\in [n\rq{}]}$ is the ratio of the number of sets picked by the algorithm to the optimum set-cover of the instance $(\{u_t: t \in [n\rq{}]\},\F)$. Korman~\cite[Theorem~2.3.4]{Korman05} shows the following hardness for online set cover:
\begin{theorem}[\cite{Korman05}]
There exists a constant $d>0$ such that if there is a (possibly randomized) polynomial time algorithm for online set cover with competitive ratio $d\log m \log n$,  then $NP \subseteq BPP$.
\end{theorem}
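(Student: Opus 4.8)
The plan is to combine two engines: the PCP-based NP-hardness of approximating \emph{offline} set cover, which supplies the ``$\log n$'' and the complexity-theoretic teeth, and the hierarchical hard-instance construction of Alon--Awerbuch--Azar--Buchbinder--Naor for online set cover, which supplies the ``$\log m$'' and whose purely information-theoretic version loses a $\log\log$ factor that we want to remove. Concretely, I would start from the Raz--Safra theorem: a constant $\beta>0$ and a polynomial-time map sending a SAT instance $\varphi$ to a set-cover instance $(U_0,\F_0)$, $|U_0|=N$, $|\F_0|=\poly(N)$, with target $k=k(N)$, so that $\varphi\in\text{SAT}$ forces a cover of size $\le k$ while $\varphi\notin\text{SAT}$ forces every cover to have size $>\beta k\log N$; one also wants the set system to be essentially \emph{regular} and \emph{spread} (no subfamily of $o(\beta\log N)$ sets covers a large constant fraction of $U_0$), a property the PCP construction delivers and which is what must survive the composition below.

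Next I would build the online instance. Take a rooted tree of depth $D$ and branching factor $b$, so there are $b^{D}$ leaves; at each node $v$ place a disjoint copy $(U_v,\F_v)$ of a suitably scaled base instance, with the sets of $\F_v$ identified with the children of $v$. The arrivals proceed top-down: at node $v$ the elements of $U_v$ are revealed one at a time in an order that is adversarial within $(U_v,\F_v)$, the algorithm responds, and the process descends to a child of $v$ whose corresponding set the algorithm has \emph{not} (sufficiently) bought --- in the randomized-algorithm setting this child is chosen at random and one invokes Yao's principle. The instance is engineered so that for each leaf $\ell$ there is a ``bundle'' of $O(k)$ sets covering every $U_v$ along the root-to-$\ell$ path simultaneously (nesting/consistency of the good covers), giving \emph{completeness}: $\varphi\in\text{SAT}\Rightarrow$ offline optimum $O(k)$. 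For \emph{soundness}, when $\varphi\notin\text{SAT}$ the spread property of $\F_0$ forces the algorithm to buy $\Omega(\beta\log N)$ sets at $v$ just to cover $U_v$, and descending to an un-bought child makes those purchases useless at the next level; so over the $D$ levels any online algorithm pays $\Omega(D\,\beta\log N)$ in expectation. Choosing $b$ and the per-node scaling so that the total numbers of sets $m$ and elements $n$ are polynomial in $N$ while $D\,\beta\log N=\Omega(\log m\cdot\log n)$, and normalizing $k=O(1)$, the competitive ratio is $\Omega(\log m\log n)$.

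The conclusion is then immediate: the whole construction is polynomial-time, so a polynomial-time (possibly randomized) online algorithm with competitive ratio $d\log m\log n$, for $d$ smaller than the hidden constant above, can be run on it, and the number of sets it buys separates the case ``$\le\tfrac12 d\log m\log n$'' ($\varphi\in\text{SAT}$) from ``$>\tfrac12 d\log m\log n$'' ($\varphi\notin\text{SAT}$) with probability bounded away from $1/2$; amplifying puts SAT in $BPP$, i.e.\ $NP\subseteq BPP$. I expect the main obstacle to be precisely the calibration in the middle paragraph: arranging that the two logarithmic factors genuinely \emph{multiply} --- which forces the hierarchy to interleave PCP-hard gadgets at all $D$ levels rather than at a single one --- while simultaneously keeping the good covers consistent along every root-to-leaf path so the offline optimum stays $O(1)$, and keeping $m$ and $n$ polynomial, which pins down the branching factor and is also what lets one dispense with the $\log\log$ loss of the information-theoretic bound by leaning on genuine NP-hardness at every node.
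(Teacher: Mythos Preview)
The paper does not prove this theorem. It is quoted verbatim from Korman's thesis \cite{Korman05} and used as a black box: the only work the paper does in this subsection is to observe that the reduction of Theorem~\ref{thm:matr-hard} is online (the long-term edges depend only on $\F$, the short-term edges on the arriving element), so an online algorithm for \MSM would yield an online algorithm for set cover and inherit the lower bound. Consequently there is no ``paper's own proof'' to compare your proposal against.

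That said, your outline is a reasonable high-level sketch of how Korman's argument goes: compose Feige/Raz--Safra-style PCP hardness for offline set cover (the $\log n$ factor and the $NP\subseteq BPP$ conclusion) with the Alon--Awerbuch--Azar--Buchbinder--Naor hierarchical adversary (the $\log m$ factor), placing a hard offline gadget at each tree node and descending into an insufficiently-covered child. The obstacle you flag at the end is exactly the delicate part: one must keep $m,n=\poly(N)$ while making the depth $D$ and $\log N$ multiply, which forces $D=\Theta(\log N)$ and constant branching, and simultaneously engineer the gadgets so that optimal covers along a root-to-leaf path are \emph{consistent} (so $\OPT=O(k)$ rather than $O(kD)$). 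Your sketch gestures at this but does not actually do it; getting the parameters and the nesting to work out is the real content of the proof, and your phrase ``a suitably scaled base instance'' is hiding all of it. If you were writing this up for real you would need to spell out the set-bundling that makes completeness hold and verify the spreadness property survives the composition --- but for the purposes of this paper, none of that is needed, since the result is simply cited.
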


Recall that in the reduction in the proof of Theorem~\ref{thm:matr-hard}, the set of long term edges depends only on $\F$. The short term edges alone depend on the elements to be covered. It can then we verified that the same approach gives a reduction from online set cover to online \MSM. It follows that the online \MSM problem does not admit an algorithm with competitive ratio better than $d\log m \log T$ unless $NP \subseteq BPP$. In fact this hardness holds even when the end time of each edge is known as soon as it appears, and the only non-zero costs are $a(e) \in \{0,1\}$.

\fi



\myspace{-0.2in}
\section{Perfect Matching Maintenance}
\label{sec:matchings}
\myspace{-0.17in}

\newcommand{\PMM}{\textsf{PMM}\xspace}

We next consider the {\em Perfect Matching Maintenance} (\PMM) problem where $E$ is the set of edges of a graph $G = (V,E)$,
and the at each step, we need to maintain a perfect matchings in $G$.

\ifconf
\textbf{Integrality Gap.} Somewhat surprisingly, we show that the
natural LP relaxation has an $\Omega(n)$ integrality gap, even for a constant number
of timesteps. The LP and the (very simple) example appears in
Appendix~\ref{sec:match-int-gap}.

\else
The natural LP
relaxation is:
\begin{align*}
  \min \sum_t c_t \cdot \x_t &+ \sum_{t,e} y_t(e) \\
  \text{s.t.~~~} \x_t &\in PM(G) \qquad\qquad\qquad \forall t \\
  y_t(e) &\geq x_t(e) - x_{t+1}(e) \qquad \forall t, e\\
  y_t(e) &\geq x_{t+1}(e) - x_t(e) \qquad \forall t, e\\
  x_t(e), y_t(e) &\geq 0
\end{align*}
The polytope $PM(G)$ is now the perfect matching polytope for $G$.

\begin{figure}[t]
\centering
\includegraphics[scale=0.5]{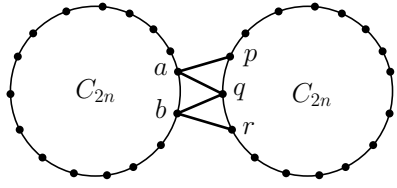}
\caption{Integrality gap example}
\label{fig:stable1}
\end{figure}

\begin{lemma}
  \label{lem:int-gap}
  There is an $\Omega(n)$ integrality gap for the \PMM problem.
\end{lemma}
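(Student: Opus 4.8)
The plan is to exhibit a single family of instances of \PMM{} on which the linear program of the preceding display has value $O(1)$ while every integral solution costs $\Omega(n)$; here $n=|V(G)|$. The instance is the graph $G$ of Figure~\ref{fig:stable1}, built from two long cycles, together with a \emph{constant} number of cost vectors $c_1,\dots,c_T$. The two ingredients to supply are (i) a cheap \emph{fractional} trajectory and (ii) a matching lower bound on every \emph{integral} trajectory; (ii) will be the bulk of the work.

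\textbf{Step 1: the instance.} First I would fix $G$ (as in the figure) so that it has a perfect matching, has $\Theta(n)$ vertices, and — this is the structural point — its perfect matchings fall into only $O(1)$ ``types'', where the parity of the two cycles rules out any ``mixed'' perfect matching, and any two distinct types disagree on $\Omega(n)$ edges (so their $\ell_1$ distance is $\Omega(n)$). Second, I would choose each $c_t$ to be supported on $O(1)$ ``marker'' edges and to take values in $\{0,K\}$ with $K=\Theta(n)$, arranged so that over the $T=O(1)$ timesteps the marker set rotates: at timestep $t$ the perfect matchings with zero holding cost are exactly those of one type $\mathcal M_t$, and the sequence $\mathcal M_1,\dots,\mathcal M_T$ cycles through the types so that every type is ``forbidden'' at some timestep and consecutive permitted types are distinct.

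\textbf{Step 2: the fractional upper bound.} I would write down an explicit fractional solution $z^*_1,\dots,z^*_T$ with every $z^*_t\in PM(G)$ — this is the step that needs the concrete gadget of Figure~\ref{fig:stable1} to pin down the right convex combination of two perfect matchings — such that the marker edges carry only $O(1/n)$ of $z^*_t$'s mass at each step and the whole trajectory re‑balances by only $O(1)$ in $\ell_1$. Then $\sum_t c_t\cdot z^*_t = T\cdot K\cdot O(1/n)=O(1)$ and $\sum_{t,e} y_t(e)=\sum_t\|z^*_t-z^*_{t+1}\|_1=O(1)$, so the LP optimum is $O(1)$.

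\textbf{Step 3: the integral lower bound, and the main obstacle.} Given any integral trajectory $M_1,\dots,M_T$ of perfect matchings, either some $M_t$ uses a marker edge active at time $t$ — in which case its holding cost is $\ge K=\Omega(n)$ and we are done — or $M_t\in\mathcal M_t$ for every $t$. In the latter case, since the permitted types change between some consecutive $t$ and $t+1$ and distinct types are $\ell_1$-far, we get $\sum_e y_t(e)\ge\|M_t-M_{t+1}\|_1=\Omega(n)$. Either way the integral cost is $\Omega(n)$, giving the claimed gap. The hard part is making Step~3 airtight: one must prove the combinatorial facts asserted in Step~1 — a case analysis over the perfect matchings of the two-cycle gadget showing they are classified into $O(1)$ types, that no perfect matching is simultaneously permitted at two timesteps whose marker sets conflict, and that distinct types differ on $\Omega(n)$ edges — since this is precisely what prevents an integral solution from either ``sitting still cheaply'' or ``switching cheaply''. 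I expect this classification, rather than the LP value computation in Step~2, to be the crux of the argument.
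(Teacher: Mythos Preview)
Your high-level architecture (two long cycles, $O(1)$ timesteps, a fractional ``center'' that barely moves while every integral trajectory is forced to jump) matches the paper's. But the instance you describe in Step~1 is incompatible with the cheap fractional solution you promise in Step~2, and this is not just a detail left to the figure---it is a genuine gap.

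You stipulate that at each timestep $t$ the \emph{only} zero-cost perfect matchings are those of a single type $\mathcal M_t$, that consecutive permitted types differ, and that distinct types are $\Omega(n)$-far in $\ell_1$. Under this design there is no fractional trajectory with both $O(1)$ holding cost and $O(1)$ movement. If $z^*_t$ tracks $\mathcal M_t$ (putting $1-O(1/n)$ weight on that type so that marker mass is $O(1/n)$ as you claim), then $\|z^*_t - z^*_{t+1}\|_1 = (1-o(1))\cdot\|\mathcal M_t-\mathcal M_{t+1}\|_1 = \Omega(n)$, contradicting your $O(1)$ movement bound. If instead $z^*_t$ sits at a fixed interior point such as the all-$\tfrac12$ point on the cycles, then it places $\Theta(1)$ mass on every cycle edge, including whatever $O(1)$ marker edges you chose, and pays $\Theta(1)\cdot K = \Theta(n)$ holding cost per step. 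Either way the LP value is $\Omega(n)$, and no gap results.

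The paper's instance avoids this by \emph{not} restricting each timestep to a single permitted type. Holding costs are $\{0,\infty\}$, and at every timestep the zero-cost \emph{support} is itself an even cycle (two $2n$-cycles at the ``steady'' steps; a single $4n$-cycle at the ``test'' steps $t=2,4$, obtained by deleting two edges and inserting two cross edges). The fractional solution is simply the all-$\tfrac12$ point of the current support, which lies in $PM(G)$ and puts \emph{zero} mass on forbidden edges; since consecutive supports differ in only four edges, the fractional movement is $O(1)$. The integral lower bound is then a parity argument: whatever matching you hold at $t=1$, it uses exactly one of $\{ab,pq\}$ or exactly one of $\{ab,qr\}$, and in the corresponding test step the support becomes a single $4n$-cycle on which your current matching is the ``wrong'' alternation, forcing $\Omega(n)$ swaps. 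So the crux is not the type classification you anticipated in Step~3, but rather designing the time-varying support so that (i) its half-integral center moves by $O(1)$, while (ii) every integral matching is misaligned with the support at some step.
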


\begin{proof}
  Consider the instance in the figure, and the following LP solution for
  4 time steps. In $x_1$, the edges of each of the two cycles has $x_e =
  1/2$, and the cross-cycle edges have $x_e = 0$. In $x_2$, we have
  $x_2(ab) = x_2(pq) = 0$ and $x_2(ap) = x_2(bq) = 1/2$, and otherwise
  it is the same as $x_1$. $x_3$ and $x_5$ are the same as $x_1$. In
  $x_4$, we have $x_4(ab) = x_4(qr) = 0$ and $x_4(aq) = x_4(br) = 1/2$,
  and otherwise it is the same as $x_1$.
  For each time $t$, the edges in the support of the solution
  $x_t$ have zero cost, and other edges have infinite cost. The only
  cost incurred by the LP is the movement cost, which is $O(1)$.

  Consider the perfect matching found at time $t = 1$, which must
  consist of matchings on both the cycles. (Moreover, the matching in
  time 3 must be the same, else we would change $\Omega(n)$ edges.)
  Suppose this matching uses exactly one edge from $ab$ and $pq$. Then
  when we drop the edges $ab, pq$ and add in $ap, bq$, we get a cycle on
  $4n$ vertices, but to get a perfect matching on this in time $2$ we
  need to change $\Omega(n)$ edges. Else the matching uses exactly one
  edge from $ab$ and $qr$, in which case going from time $3$ to time $4$
  requires $\Omega(n)$ changes.
\end{proof}

\fi


\ifconf
\textbf{Hardness.}
Moreover, in Appendix~\ref{app:sec:match-hard} we show that the Perfect
Matching Maintenance problem is very hard to approximate:
\begin{theorem}
  For any $\e > 0$ it is NP-hard to distinguish \PMM instances with cost
  $N^\e$ from those with cost $N^{1-\e}$, where $N$ is the number of
  vertices in the graph. This holds even when the holding costs are in
  $\{0,\infty\}$, acquisition costs are $1$ for all edges, and the number
  of time steps is a constant.
\end{theorem}

\else
\subsection{Hardness of PM-Maintenance}
\label{sec:match-hard}

In this section we prove the following hardness result:
\begin{theorem}
  For any $\e > 0$ it is NP-hard to distinguish \PMM instances with cost
  $N^\e$ from those with cost $N^{1-\e}$, where $N$ is the number of
  vertices in the graph. This holds even when the holding costs are in
  $\{0,\infty\}$, acquisition costs are $1$ for all edges, and the number
  of time steps is a constant.
\end{theorem}

\begin{proof}
  The proof is via reduction from $3$-coloring. We assume we are given
  an instance of $3$-coloring $G = (V,E)$ where the maximum degree of
  $G$ is constant. It is known that the $3$-coloring problem is still
  hard for graphs with bounded degree~\cite[Theorem~2]{GuruK04}.


  \begin{figure}[h]
    \centering
    \includegraphics[scale=0.5]{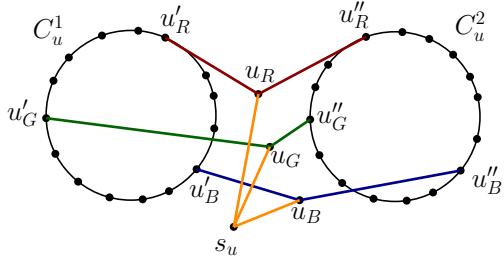}
    \caption{Per-vertex gadget}
    \label{fig:gadget}
  \end{figure}

  We construct the following gadget $X_u$ for each vertex $u \in V$. (A
  figure is given in Figure~\ref{fig:gadget}.)
  \begin{OneLiners}
  \item There are two cycles of length $3\ell$, where $\ell$ is odd. The
    first cycle (say $C_u^1$) has three distinguished vertices $u_R',
    u_G', u_B'$ at distance $\ell$ from each other. The second (called
    $C_u^2$) has similar distinguished vertices $u_R'', u_G'', u_B''$ at
    distance $\ell$ from each other.
  \item There are three more ``interface'' vertices $u_R, u_G,
    u_B$. Vertex $u_R$ is connected to $u_R'$ and $u_R''$, similarly for
    $u_G$ and $u_B$.
  \item There is a special ``switch'' vertex $s_u$, which is connected
    to all three of $\{u_R, u_G, u_B\}$. Call these edges the
    \emph{switch} edges.
  \end{OneLiners}
  Due to the two odd cycles, every perfect matching in $X_u$ has the
  structure that one of the interface vertices is matched to some vertex
  in $C_u^1$, another to a vertex in $C_u^2$ and the third to the switch
  $s_u$.  We think of the subscript of the vertex matched to $s_u$ as
  the color assigned to the vertex $u$.

  At every odd time step $t \in T$, the only allowed edges are those
  within the gadgets $\{X_u\}_{u \in V}$: i.e., all the holding costs
  for edges within the gadgets is zero, and all edges between gadgets
  have holding costs $\infty$. This is called the ``steady state''.

  At every even time step $t$, for some matching $M_t \sse E$ of the
  graph, we move into a ``test state'', which intuitively tests whether
  the edges of a matching $M_t$ have been properly colored. We do this
  as follows. For every edge $(u,v) \in M_t$, the switch edges in $X_u,
  X_v$ become unavailable (have infinite holding costs). Moreover, now
  we allow some edges that go between $X_u$ and $X_v$, namely the edge
  $(s_u, s_v)$, and the edges $(u_i, v_j)$ for $i,j \in \{R,G,B\}$ and
  $i \neq j$. Note that any perfect matching on the vertices of $X_u
  \cup X_v$ which only uses the available edges would have to match
  $(s_u, s_v)$, and one interface vertex of $X_u$ must be matched to one
  interface vertex of $X_v$. Moreover, by the structure of the allowed
  edges, the colors of these vertices must differ. (The other two
  interface vertices in each gadget must still be matched to their odd
  cycles to get a perfect matching.) Since the graph has bounded degree,
  we can partition the edges of $G$ into a constant number of matchings
  $M_1, M_2, \ldots, M_{\Delta}$ for some $\Delta = O(1)$ (using Vizing's
  theorem). Hence, at time step $2\tau$, we test the edges of the
  matching $M_\tau$. The number of timesteps
  is $T = 2\Delta$, which is a constant.
  \begin{figure}[h]
    \centering
    \includegraphics[scale=0.5]{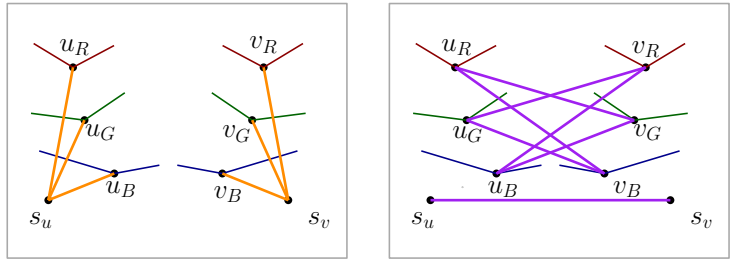}
    \caption{On the left, the steady-state edges incident to the
      interface and switch vertices of edge $(u,v)$. The
      test-state edges are on the right.}
    \label{fig:gadget2}
  \end{figure}

  Suppose the graph $G$ was indeed $3$-colorable, say $\chi: V \to
  \{R,G,B\}$ is the proper coloring. In the steady states, we choose a
  perfect matching within each gadget $X_u$ so that $(s_u, u_{\chi(u)})$
  is matched. In the test state $2t$, if some edge $(u, v)$ is in the
  matching $M_t$, we match $(s_u, s_v)$ and $(u_{\chi(u)},
  v_{\chi(v)})$. Since the coloring $\chi$ was a proper coloring, these
  edges are present and this is a valid perfect matching using only the
  edges allowed in this test state. Note that the only changes are that
  for every test edge $(u,v)\in M_t$, the matching edges $(s_u,
  u_{\chi(u)})$ and $(s_v, v_{\chi(v)})$ are replaced by $(s_u, s_v)$
  and $(u_{\chi(u)}, v_{\chi(v)})$. Hence the total acquisition cost
  incurred at time $2t$ is $2|M_t|$, and the same acquisition cost is
  incurred at time $2t+1$ to revert to the steady state. Hence the total
  acquisition cost, summed over all the timesteps, is $4|E|$.

  Suppose $G$ is not $3$-colorable. We claim that there exists vertex $u
  \in U$ such that the interface vertex not matched to the odd cycles is
  different in two different timesteps---i.e., there are times $t_1,
  t_2$ such that $u_i$ and $u_j$ (for $i \neq j$) are the states. Then
  the length of the augmenting path to get from the perfect matching at
  time $t_1$ to the perfect matching at $t_2$ is at least $\ell$. Now if
  we set $\ell = n^{2/\eps}$, then we get a total acquisition cost of at
  least $n^{2/\eps}$ in this case.

  The size of the graph is $N := O(n\ell) = O(n^{1+2/\eps})$, so the gap
  is between $4|E| = O(n) = O(N^{\eps})$ and $\ell = N^{1-\eps}$. This
  proves the claim.
\end{proof}
\fi

\myspace{-0.3in}
\section{Conclusions}
\myspace{-0.1in}

In this paper we studied multistage optimization problems: an
optimization problem (think about finding a minimum-cost spanning tree
in a graph) needs to be solved repeatedly, each day a different set of
element costs are presented, and there is a penalty for changing the
elements picked as part of the solution. Hence one has to hedge between
sticking to a suboptimal solution and changing solutions too rapidly.
We present online and offline algorithms when the optimization problem
is maintaining a base in a matroid. We show that our results are optimal
under standard complexity-theoretic assumptions. We also show that the problem of
maintaining a perfect matching becomes impossibly hard.

Our work suggests several directions for future research. It is natural
to study other combinatorial optimization problems, both polynomial time
solvable ones such shortest path and min-cut, as well NP-hard ones such
as min-max load balancing and bin-packing in this multistage framework
with acquisition costs. Moreover, the approximability of the {\em bipartite} matching maintenance, as well as matroid intersection maintenance remains open.  Our hardness results for the matroid problem
hold when edges have $\{0,1\}$ acquisition costs. The unweighted version
where all acquisition costs are equal may be easier; we currently know
no hardness results, or sub-logarithmic approximations for this useful
special case.


\bibliographystyle{abbrv}
{\bibliography{basepackingX}}

\appendix

\section{Lower Bounds: Hardness and Gap Results}

\subsection{Hardness for Time-Varying Matroids}
\label{sec:time-varying}

An extension of \MMM/\MSM problems is to the case when the set of
elements remain the same, but the matroids change over time. Again the
goal in \MMM is to maintain a matroid base at each time.

\begin{theorem}
  \label{thm:diff-matrs-wpb}
  The \MMM problem with different matroids is NP-hard to approximate
  better than a factor of $\Omega(T)$, even for partition matroids, as
  long as $T \ge 3$.
\end{theorem}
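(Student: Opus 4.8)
The plan is a reduction from the \emph{gap} version of $3$-Dimensional Matching. Recall that $3$DM is APX-hard: there is a constant $\epsilon_0 > 0$ for which it is NP-hard to distinguish instances on parts $A,B,C$, each of size $n$, that admit a perfect matching from instances in which every $3$D matching has size at most $(1-\epsilon_0)n$. Given an instance with hyperedge set $F \subseteq A \times B \times C$ (assume without loss of generality that every vertex lies in some hyperedge, else there is trivially no perfect matching), I build the following \MMM instance: the ground set is $F$, every element has acquisition cost $a(e) = 1$, and all holding costs are $0$. For timestep $t$ I use the partition matroids $\M_A,\M_B,\M_C$ on $F$ according to whether $t \equiv 1,2,0 \pmod 3$, where $\M_X$ partitions $F$ by the $X$-endpoint of each hyperedge, with all capacities equal to $1$. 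Thus a base of $\M_X$ is exactly a set of $n$ hyperedges hitting every $X$-vertex once, and a common base of all three matroids is exactly a perfect $3$D matching; this fits the fact that \MMM is polynomial for $T \le 2$ (it is a matroid-intersection problem) and that three matroids are what is needed for hardness.

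\emph{Completeness.} If the instance admits a perfect matching $M$, then $M$ is a base of $\M_A$, $\M_B$ and $\M_C$ simultaneously, so $B_t := M$ for all $t$ is feasible, with total cost $|M| = n$ (paid once at $t=1$; no holding cost is incurred).

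\emph{Soundness.} Assume every $3$D matching has size at most $(1-\epsilon_0)n$, and fix any solution $\{B_t\}$. Consider a block of three consecutive timesteps $\{3k-2,3k-1,3k\}$, whose matroids are $\M_A,\M_B,\M_C$, and write $Y = B_{3k-2}$, $Z = B_{3k-1}$, $W = B_{3k}$; each has exactly $n$ elements. The set $Y \cap Z \cap W$ is a subset of a base of each of the three matroids, hence independent in all three, hence a $3$D matching, so $|Y \cap Z \cap W| \le (1-\epsilon_0)n$. On the other hand, since all three sets have size $n$ we have $|Y \setminus Z| = |Z \setminus Y|$ and $|Z \setminus W| = |W \setminus Z|$, and $|Y \setminus W| \le |Y \setminus Z| + |Z \setminus W|$, so $|Y \cap Z \cap W| \ge n - |Y\setminus Z| - |Y \setminus W| \ge n - 2|Z\setminus Y| - |W\setminus Z|$. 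Combining the two bounds gives $2|Z\setminus Y| + |W\setminus Z| \ge \epsilon_0 n$, hence the acquisition cost incurred during this block, which is at least $|Z\setminus Y| + |W\setminus Z|$, is at least $\epsilon_0 n/2$. Summing over the $\lfloor T/3\rfloor$ disjoint blocks, and including the unavoidable $|B_1 \setminus B_0| = n$ in the first block, every solution costs at least $n + \lfloor T/3\rfloor \cdot \epsilon_0 n/2$.

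Therefore completeness cost $n$ versus soundness cost $n + \lfloor T/3\rfloor\,\epsilon_0 n/2$ yields a gap of $1 + \lfloor T/3\rfloor\,\epsilon_0/2 = \Omega(T)$, for every $T \ge 3$, with all matroids used being partition matroids. The one nonelementary ingredient — and the step I would be most careful about — is the appeal to the \emph{gap} hardness of $3$DM (perfect matching versus max matching at most $(1-\epsilon_0)n$, with the completeness case having a genuine perfect matching): plain NP-hardness of $3$DM would not suffice, because for an instance that is only one hyperedge short of perfect the three bases can be chosen to agree except in $O(1)$ places, so the resulting gap would tend to $1$. Everything else is elementary counting about partition-matroid bases.
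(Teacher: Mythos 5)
Your proof is correct and follows essentially the same route as the paper: reduce from 3-dimensional matching by putting the three partition matroids (by $A$-, $B$-, and $C$-coordinate) in a cycle of period three, so that a perfect matching yields cost $n$ while in the no-case every block of three consecutive steps forces $\Omega(\eps_0 n)$ additional acquisition cost, giving the $\Omega(T)$ gap. Your explicit per-block counting and your observation that one needs hardness of 3DM at gap location $1$ (perfect matching versus every matching having size at most $(1-\eps_0)n$) simply make precise what the paper's brief appeal to APX-hardness of 3DM leaves implicit.
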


\begin{proof}
  The reduction is from 3D-Matching (3DM). An instance of 3DM has three
  sets $X, Y, Z$ of equal size $|X| = |Y| = |Z| = k$, and a set of
  hyperedges $E \sse X \times Y \times Z$. The goal is to choose a set
  of disjoint edges $M \sse E$ such that $|M| = k$.

  First, consider the instance of \MMM with three timesteps $T = 3$. The
  universe elements correspond to the edges. For $t=1$, create a
  partition with $k$ parts, with edges sharing a vertex in $X$ falling
  in the same part. The matroid $\M_1$ is now to choose a set of
  elements with at most one element in each part. For $t=2$, the
  partition now corresponds to edges that share a vertex in $Y$, and for
  $t=3$, edges that share a vertex in $Z$. Set the movement weights
  $w(e) = 1$ for all edges.

  If there exists a feasible solution to 3DM with $k$ edges, choosing
  the corresponding elements form a solution with total weight $k$. If
  the largest matching is of size $(1 - \varepsilon)k$, then we must pay
  $\Omega(\varepsilon\, k)$ extra over these three timesteps. This gives
  a $k$-vs-$(1+\Omega(\varepsilon))k$ gap for three timesteps.

  To get a result for $T$ timesteps, we give the same matroids
  repeatedly, giving matroids $\M_{t \pmod 3}$ at all times $t \in
  [T]$. In the ``yes'' case we would buy the edges corresponding to the
  3D matching and pay nothing more than the initial $k$, whereas in the
  ``no'' case we would pay $\Omega(\varepsilon k)$ every three
  timesteps. Finally, the APX-hardness for 3DM~\cite{Kann:1991:MBM:105391.105396} gives the
  claim.
\end{proof}

The time-varying \MSM problem does admit an $O(\log rT)$ approximation,
as the randomized rounding (or the greedy algorithm) shows. However, the
equivalence of \MMM and \MSM does not go through when the matroids
change over time.

The restriction that the matroids vary over time is essential for the
NP-hardness, since if the partition matroid is the same for all times,
the complexity of the problem drops radically.
\begin{theorem}
  \label{thm:partition}
  The \MMM problem with partition matroids can be solved in polynomial
  time.
\end{theorem}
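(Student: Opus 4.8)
The plan is to exploit the product structure of a partition matroid to reduce \MMM on it to a collection of independent, polynomially solvable subproblems, each of which turns into a weighted interval‑covering problem with a totally unimodular constraint matrix.

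\textbf{Step 1 (decompose along the parts).} Write $E = G_1 \sqcup \cdots \sqcup G_k$ with capacities $d_1,\dots,d_k$, so that a base of $\M$ picks exactly $\min(d_j,|G_j|)$ elements of each $G_j$. Both $c_t(B_t)=\sum_{e\in B_t}c_t(e)$ and $a(B_t\setminus B_{t-1})=\sum_{e\in B_t\setminus B_{t-1}}a(e)$ are separable over elements, hence over the parts, and $B_t$ is a base of $\M$ iff $B_t\cap G_j$ is a base of the uniform matroid $U_{d_j,|G_j|}$ for every $j$. So \MMM on $\M$ splits into $k$ independent instances of \MMM, the $j$‑th on the uniform matroid over $G_j$ (parts with $|G_j|\le d_j$ are trivial, as each element of $G_j$ must be held at every step); it suffices to solve each in polynomial time.

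\textbf{Step 2 (reduce a uniform instance to weighted interval covering).} Fix a part, a uniform matroid $U_{d,m}$ with $m>d$. By Lemma~\ref{lem:pack-cover} the optima of \MMM and \MSM coincide and an \MSM solution converts to an \MMM solution of no larger cost, so it is enough to solve \MSM. Apply the offline exact reduction to the interval model of Section~\ref{sec:intervals}: we get at most $m\binom{T+1}{2}$ parallel ``interval elements'', where $e_{lr}$ has lifetime $[l,r]$ and cost $w(e_{lr})=a(e)+\sum_{t=l}^{r}c_t(e)$. Since the matroid is uniform of rank $d$, a set has full rank at time $t$ exactly when it contains at least $d$ elements alive at $t$; thus \MSM on $U_{d,m}$ is precisely: choose a subset $X$ of interval elements minimizing $\sum_{i\in X}w_i$ such that for every $t\in[T]$ at least $d$ chosen intervals contain $t$.

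\textbf{Step 3 (total unimodularity and conclusion).} Consider the integer program $\min\{\, w^\top x : \sum_{i : t\in I_i} x_i \ge d\ \ \forall t\in[T],\ 0\le x_i\le 1 \,\}$. Its constraint matrix has columns equal to incidence vectors of the contiguous intervals $I_i$, i.e.\ it has the consecutive‑ones property and is therefore totally unimodular, and appending the box constraints preserves this. Hence $\{x: Ax\ge d\mathbf{1},\,0\le x\le 1\}$ is an integral polytope, and since the linear program has polynomially many variables and $T$ nontrivial constraints it is solvable in polynomial time, giving an optimal integral $x$. Taking, for each original $e$, the union of the lifetimes of its chosen interval elements yields an optimal \MSM solution for the part; Lemma~\ref{lem:pack-cover} converts it to an optimal \MMM solution; reassembling the $k$ parts gives the claimed polynomial‑time algorithm.

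\textbf{Main obstacle.} The substantive points to verify are that the interval reduction is \emph{exactly} cost‑preserving once overlapping/adjacent intervals of the same element are allowed to be merged (so that ``$\ge d$ alive chosen elements at $t$'' matches ``$\ge d$ chosen intervals through $t$''), and that the resulting interval matrix is genuinely totally unimodular — both standard, the latter being the classical fact about consecutive‑ones matrices. An alternative that avoids the polynomial blow‑up of the interval reduction is to argue directly that the relaxation~(\ref{eq:lp2}) specialized to a uniform (hence partition) matroid is a network/transportation‑type LP and therefore integral, but the interval formulation makes the total unimodularity most transparent.
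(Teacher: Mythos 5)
There is a genuine gap in Step~2. In the interval model, the copies $e_{l r}$ of a single original element $e$ are \emph{parallel} to one another, so a set containing several alive copies of the same $e$ at time $t$ contributes only $1$ to the rank at time $t$. Your covering constraint ``at least $d$ chosen intervals contain $t$'' therefore does not express ``full rank at time $t$'' for a uniform matroid $U_{d,m}$ with $d\ge 2$; it is a strict relaxation of \MSM. Concretely, take $d=2$, $m=3$, one cheap element $e$ and two very expensive elements $f,g$: the true optimum must keep two \emph{distinct} elements alive at every step and hence pay for $f$ or $g$, whereas your IP can double-cover every timestep using only overlapping interval copies of $e$ (e.g.\ $e_{[1,T]}$ together with $e_{[1,\lceil T/2\rceil]}$ and $e_{[\lceil T/2\rceil,T]}$), at a cost that can be arbitrarily smaller and whose support has rank $1<d$. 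So the optimal solution of your LP/IP need not be (convertible to) a feasible \MMM solution, and its value can undershoot the optimum. Repairing this requires adding, for every original element $e$ and time $t$, the constraint that at most one chosen interval of $e$ contains $t$; but then each column has consecutive ones in \emph{two} separate row blocks (the global covering rows and the per-element packing rows), the consecutive-ones property is lost, and total unimodularity no longer follows by the classical argument. Note that your Step~1 decomposition is fine, and for parts with capacity $d_j=1$ the interval/TU argument does go through, so the flaw is specifically the $d_j\ge 2$ case.

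The paper avoids this issue by not passing to the interval model at all: it builds a time-expanded min-cost flow network with a node $v_{et}$ for each element $e$ and time $t$, arcs $(v_{et},v_{e',t+1})$ within a part whose cost is the acquisition cost when $e\neq e'$, per-unit node costs $c_t(e)$, and one unit of flow per part; integrality of min-cost flow then yields the optimal bases directly (and, with unit node capacities after node-splitting, handles larger part capacities). Your closing remark about showing that~(\ref{eq:lp2}) specializes to a network-type LP is essentially this route, and is the direction you would need to develop to make the argument correct for general capacities.
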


\begin{proof}
  The problem can be solved using min-cost flow. Indeed, consider the
  following reduction.  Create a node $v_{et}$ for each element $e$ and
  timestep $t$. Let the partition be $E = E_1 \cup E_2 \cup \ldots \cup
  E_r$. Then for each $i \in [r]$ and each $e, e' \in E_i$, add an arc
  $(v_{et}, v_{e',t+1})$, with cost $w(e') \cdot \mathbf{1}_{e \neq
    e'}$. Add a cost of $c_t(e)$ per unit flow through vertex
  $v_{et}$. (We could simulate this using edge-costs if needed.)
  Finally, add vertices $s_1, s_2, \ldots, s_r$ and source $s$. For each
  $i$, add arcs from $s_i$ to all vertices $\{v_{e1}\}_{e \in E_i}$ with
  costs $w(e)$. All these arcs have infinite capacity. Now add unit
  capacity edges from $s$ to each $s_i$, and infinite capacity edges
  from all nodes $v_{eT}$ to $t$.

  Since the flow polytope is integral for integral capacities, a flow of
  $r$ units will trace out $r$ paths from $s$ to $t$, with the elements
  chosen at each time $t$ being independent in the partition matroid,
  and the cost being exactly the per-time costs and movement costs of
  the elements. Observe that we could even have time-varying movement
  costs.  Whereas, for graphical matroids the problem is $\Omega(\log
  n)$ hard even when the movement costs for each element do not change
  over time, and even just lie in the set $\{0,1\}$.
\end{proof}

Moreover, the restriction in Theorem~\ref{thm:diff-matrs-wpb} that $T
\ge 3$ is also necessary, as the following result shows.
\begin{theorem}
  \label{thm:two}
  For the case of two rounds (i.e., $T = 2$) the \MSM problem can be
  solved in polynomial time, even when the two matroids in the two
  rounds are different.
\end{theorem}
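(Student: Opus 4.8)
The plan is to attack the two-stage \MSM instance directly (Lemma~\ref{lem:pack-cover} is unavailable here, since it needs the matroids to coincide). Write $\M_1,\M_2$ for the two matroids, $a(\cdot)$ for the acquisition costs, and $c_1(\cdot),c_2(\cdot)$ for the holding costs in the two steps; recalling $S_0=\emptyset$, we must pick a spanning set $S_1$ of $\M_1$ and a spanning set $S_2$ of $\M_2$ minimizing $c_1(S_1)+a(S_1)+c_2(S_2)+a(S_2\setminus S_1)$.

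First I would reduce to choosing \emph{bases}. If $S_1$ is spanning but not a base of $\M_1$, pick $e\in S_1$ with $S_1-e$ still spanning and delete it: the $c_1$- and $a(S_1)$-terms drop by $c_1(e)+a(e)\ge 0$, the $c_2$-term is unchanged, and the only term that can grow, $a(S_2\setminus S_1)$, grows by at most $a(e)$ (and only if $e\in S_2$), so the total does not increase; iterate, and argue analogously (even more easily) for $S_2$. Thus it suffices to solve: over bases $B_1\in\mathcal B(\M_1)$ and $B_2\in\mathcal B(\M_2)$, minimize $c_1(B_1)+a(B_1)+c_2(B_2)+a(B_2\setminus B_1)=c_1(B_1)+c_2(B_2)+a(B_1\cup B_2)$. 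Setting $p:=c_1+a$ and $q:=c_2+a$, this equals $p(B_1)+q(B_2)-a(B_1\cap B_2)$: pick a base in each of two matroids, with a reward $a(e)$ for every element used in both.

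The crux — and the one place the restriction $T=2$ is used — is solving this last problem in polynomial time. My plan is to cast it as a weighted matroid-intersection problem (equivalently, a min-cost submodular-flow problem, the matroid analogue of the min-cost-flow reduction used for partition matroids). I would work on the doubled ground set $E\times\{1,2\}$, use a direct-sum matroid $\M_1\oplus\M_2$ to force $\{e:(e,1)\text{ chosen}\}$ to be a base of $\M_1$ and $\{e:(e,2)\text{ chosen}\}$ a base of $\M_2$, and introduce a small per-element gadget whose role is to collect the discount $-a(e)$ exactly when both copies of $e$ are selected; the technical content is to show this "shared-use discount'' can be absorbed into a second matroid (so the whole thing is a genuine weighted matroid intersection, solvable by Edmonds' algorithm) — this works because with only two layers the coupling between chosen bases is pairwise. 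The fallback, if this encoding proves slippery, is to prove directly that the natural LP relaxation $\min\, p\cdot z_1+q\cdot z_2-a\cdot v$ over $z_1\in\P_B(\M_1)$, $z_2\in\P_B(\M_2)$, $0\le v\le z_1$, $v\le z_2$ is integral — via an uncrossing / total-dual-integrality argument for the two base polytopes glued along the $v$-constraints — and then solve it by the ellipsoid method with the matroid-polytope separation oracle.

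I expect this final step, turning "two bases with a shared-use discount'' into something with an integral polytope or an honest matroid-intersection structure, to be the main obstacle; the reduction to bases and the rewriting of the objective above are routine. I would also remark that this is consistent with Theorem~\ref{thm:diff-matrs-wpb}: once $T\ge 3$ the acquisition term couples three consecutive bases simultaneously, and neither the "reduce to bases'' step nor the pairwise matroid-intersection encoding survives.
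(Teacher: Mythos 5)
Your preliminary steps are fine: the reduction from spanning sets to bases is correct (and is a point the paper itself glosses over), and the rewriting of the objective as $p(B_1)+q(B_2)-a(B_1\cap B_2)$ with $p=c_1+a$, $q=c_2+a$ is right. But the crux of the theorem --- an actual polynomial-time algorithm for ``two bases with a shared-use discount'' --- is exactly the part you leave unproven, and your proposed route is shaky. In weighted matroid intersection the objective must be a modular (per-element) weight on the ground set; the discount $-\sum_e a(e)\,\mathbf{1}[(e,1)\in X \wedge (e,2)\in X]$ on the doubled ground set $E\times\{1,2\}$ is not modular, and no choice of a ``second matroid'' can make an element's weight depend on whether another element was selected. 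So the gadget cannot live inside the matroid; it has to change the ground set, and you never say how. The fallback (integrality of the LP with variables $z_1\in\P_B(\M_1)$, $z_2\in\P_B(\M_2)$, $0\le v\le z_1$, $v\le z_2$) is likewise only asserted; an uncrossing/TDI proof for that glued system is a nontrivial project in its own right, not a routine remark.

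The paper's proof closes this gap with a one-line change of ground set that you came close to but did not find: take as elements the \emph{ordered pairs} $(e,e')$, meaning ``use $e$ at time $1$ and $e'$ at time $2$,'' with weight $c_1(e)+c_2(e')+a(e)+a(e')\,\mathbf{1}[e\neq e']$; lift $\M_1$ to pairs through the first coordinate and $\M_2$ through the second (parallel copies), and compute a minimum-weight common basis by Edmonds' algorithm. A common basis is exactly a base $B_1$ of $\M_1$ and a base $B_2$ of $\M_2$ together with a bijection between them, and since any bijection can be chosen to fix $B_1\cap B_2$, the minimum-weight common basis has weight exactly $\min_{B_1,B_2}\bigl(c_1(B_1)+c_2(B_2)+a(B_1\cup B_2)\bigr)$ --- the diagonal pairs $(e,e)$ collect your discount automatically, with the weight still modular. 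So the statement is true and your framing is consistent with it, but as it stands your argument has a genuine hole precisely at the step you flag as ``the main obstacle.''
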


\begin{proof}
  The solution is simple, via matroid intersection. Suppose the matroids
  in the two timesteps are $\M_1 = (E, \I_1)$ and $\M_2 =
  (E,\I_2)$. Create elements $(e, e')$ which corresponds to picking
  element $e$ and $e'$ in the two time steps, with cost $c_1(e) +
  c_2(e') + w_e + w_{e'}\mathbf{1}_{e \neq e'}$. Lift the matroids
  $\M_1$ and $\M_2$ to these tuples in the natural way, and look for a
  common basis.
\end{proof}




\subsection{Lower Bound for Deterministic Online Algorithms}
\label{sec:lbd-deterministic-online}

We note that deterministic online algorithms cannot get any non-trivial guarantee for the \MMM problem, even in the simple case of a $1$-uniform matroid. This is related to the lower bound for deterministic algorithms for paging. Formally, we have the 1-uniform matroid on $m$ elements, and $T=m$. All acquisition costs $a(e)$ are 1. In the first period, all holding costs are zero and the online algorithm picks an element, say $e_1$. Since we are in the non-oblivious model,the algorithm knows $e_1$ and can in the second time step, set $c_2(e_1)=\infty$, while leaving the other ones at zero. Now the algorithm is forced to move to another edge, say $e_2$, allowing the adversary to set $c_3(e_2)=\infty$ and so on. At the end of $T=m$ rounds, the online algorithm is forced to incur a cost of 1 in each round, giving a total cost of $T$. However, there is still an edge whose holding cost was zero throughout, so that the offline OPT is 1. Thus against a non-oblivious adversary, any online algorithm must incur a $\Omega(\min(m,T))$ overhead.

\subsection{An $\Omega(\min(\log T, \log \frac{a_{max}}{a_{min}}))$ LP
  Integrality Gap}
\label{sec:int-gap-matroids}

In this section, we show that if the aspect ratio of the movement costs is not bounded, the linear program has a $\log T$ gap, even when $T$ is exponentially larger than $m$. We present an instance where $\log T$ and $\log \frac{a_{max}}{a_{min}}$ are about $r$ with $m=r^2$, and the linear program has a gap of $\Omega(\min(\log T, \log \frac{a_{max}}{a_{min}}))$. This shows that the $O(\min(\log T,\log\frac{a_{max}}{a_{min}}))$ term in our rounding algorithm is unavoidable.

The instance is a graphical matroid, on a graph $G$ on $\{v_0,v_1,\ldots,v_n\}$, and $T={n \choose \frac{n}{2}} = 2^{O(n)}$. The edges $(v_0,v_i)$ for $i\in[n]$ have acquisition cost $a(v_0,v_i)=1$ and holding cost $c_t(v_0,v_i)=0$ for all $t$. The edges $(v_i,v_j)$ for $i,j \in [n]$ have acquisition cost $\frac{1}{nT}$ and have holding cost determined as follows: we find a bijection between the set $[T]$  and the set of partitions $(U_t,V_t) $ of $\{v_1,\ldots,v_n\}$ with each of $U_t$ and $V_t$ having size $\frac{n}{2}$ (by choice of $T$ such a bijection exists, and can be found e.g. by arranging the $U_t$'s in lexicographical order.) . In time step $t$, we set $c_t(e) = 0$ for $e \in (U_t\times U_t) \cup (V_t \times V_t)$, and $c_t(e)=\infty$ for all $e \in U_t \times V_t$.

First observe that no feasible integral solution to this instance can pay acquisition cost less than $\frac{n}{2}$ on the $(v_0,v_i)$ edges. Suppose that the solution picks edges $\{(v_0,v_i): v_i \in U_{sol}\}$ for some set $U_{sol}$ of size at most $\frac{n}{2}$. Then any time step $t$ such that $U_{sol} \subseteq U_t$, the solution has picked no edges connecting $v_0$ to $V_t$, and all edges connecting $U_t$ to $V_t$ have infinite holding cost in this time step. This contradicts the feasibility of the solution. Thus any integral solution has cost $\Omega(n)$.

Finally, we show that on this instance, (\ref{eq:lp2}) from Section~\ref{sec:lp-round}, has a feasible solution of cost $O(1)$. We set $y_t(v_0,v_i) = \frac{2}{n}$ for all $i \in [n]$, and set $y_t(v_i,v_j) = \frac{2}{n}$ for $(v_i,v_j) \in (U_t \times U_t) \cup (V_t \times V_t)$. It is easy to check that $z_t=y_t$ is in the spanning tree polytope for all time steps $t$. Finally, the total acquisition cost is at most $n \cdot 1 \cdot \frac{2}{n}$ for the edges incident on $v_0$ and at most $T \cdot n^2 \cdot \frac{1}{nT} \cdot \frac{2}{n}$ for the other edges, both of which are $O(1)$. The holding costs paid by this solution is zero. Thus the LP has a solution of cost $O(1)$

The claim follows.


\section{The Greedy Algorithm}
\label{sec:greedy}

\newcommand{\ben}{\textsf{ben}}
\newcommand{\MST}{\textsf{mst}}

The greedy algorithm for \MSM is the natural one.
We consider the interval view of the problem (as in
Section~\ref{sec:intervals}) where each element only has acquisition
costs $a(e)$, and can be used only in some interval $I_e$. Given a
current subset $X \sse E$, define $X_t := \{ e' \in X \mid I_{e'} \ni t
\}$.  The benefit of adding an element $e$ to $X$ is
\[ \ben_X(e) = \sum_{t \in I_e} (\rank(X_t \cup \{e\}) - \rank(X_t)) \]
and the greedy algorithm repeatedly picks an element $e$ maximizing
$\ben_X(e)/a(e)$ and adds $e$ to $X$. This is done until $\rank(X_t) =
r$ for all $t \in [T]$.

Phrased this way, an $O(\log T)$ bound on the approximation ration follows from Wolsey~\cite{Wol82}. We next give an alternate dual fitting proof. We do not know of an instance with uniform acquisition costs where greedy does not give a constant factor approximation. The dual fitting approach may be useful in proving a better approximation bound for this special case.

The natural LP is:
\begin{align}
  P := \min \ts \sum_e a(e) \cdot x_e & \tag{LP1} \label{eq:lp1} \\
  \text{s.t.~~~} \{ z_{et} \}_e &\in \P_B(\M) \qquad\qquad\qquad \forall
  t \notag \\
    z_{et} &\leq x(e) \qquad\qquad\qquad\forall e, \forall t \in I_e \notag\\
    x_e &\geq 0 \qquad\qquad\qquad\forall e \notag \\
    z_{et} &\geq 0 \qquad\qquad\qquad\forall e, \forall t \in I_e \notag
\end{align}
where the polytope $\P_B(\M)$ is the base polytope of the matroid $\M$.

Using Lagrangian variables $\beta_{et} \geq 0$ for each $e$ and $t \in
I_e$, we write a lower bound for $P$ by
\begin{align*}
  D(\beta) := \min \ts \sum_e a(e) \cdot x_e &+ \sum_{e,t \in I_e} \beta_{et}(z_{et} - x_e) \\
  \text{s.t.~~~} z_{et} &\in \P_B(\M) \qquad\qquad\qquad \forall t \\
    x_e, z_{et} &\geq 0
\end{align*}
which using the integrality of the matroid polytope can be rewritten as:
\begin{align*}
  \min_{x \geq 0} \ts \sum_e x_e \big(a(e) - \sum_{e,t \in I_e}
  \beta_{et}\big) &+ \ts \sum_{t}
  \MST(\beta_{et}).
\end{align*}
Here, $\MST(\beta_{et})$ denotes the cost of the minimum weight base at
time $t$ according to the element weights $\{\beta_{et}\}_{e \in E}$,
where the available elements at time $t$ is $E_t = \{ x \in E \mid t \in
I_e \}$. The best lower bound is:
\begin{align*}
  D := \max \ts \sum_{t}  \MST(\beta_{et}) & \\
  \text{s.t.~~~} \ts \sum_{t \in I_e} \beta_{et} \leq a(e) \\
  \beta_{et} \geq 0.
\end{align*}

The analysis of greedy follows the dual-fitting proofs of~\cite{Chvatal, FNW-II}.
\begin{theorem}
  \label{thm:greedy}
  The greedy algorithm outputs an $O(\log |I_{\max}|)$-approximation to
  \MSM, where $|I_{\max}|$ is the length of the longest interval that an
  element is alive for. Hence, it gives an $O(\log T)$-approximation.
\end{theorem}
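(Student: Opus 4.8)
The plan is a dual-fitting argument in the style of the Chv\'atal and Fisher--Nemhauser--Wolsey analyses, exploiting that in the interval model \MSM is exactly the submodular-cover problem $\min\{\sum_e a(e)x_e : f(X)=f(E)\}$ for the monotone submodular integer function $f(X):=\sum_t \rank(X_t)$, with $f(E)=rT$ (one may assume the instance feasible, i.e.\ each $E_t$ spans $\M$). Two structural facts drive everything. First, adding one element changes $\rank(X_t)$ by at most $1$ for every $t$, so a greedy pick's marginal splits across timesteps into $0/1$ increments; in particular the maximum singleton marginal $d:=\max_e(f(\{e\})-f(\emptyset))$ equals the largest number of timesteps at which an element is non-loop, so $d\le|I_{\max}|$. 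Second, for a fixed element the benefit-to-cost ratio is non-increasing as $X$ grows (submodularity), so the ratio $\rho_j$ of the $j$-th greedy pick (the running maximum) is non-increasing in $j$. The target is $\ALG\le(1+H_d)\,D$, where $H_d=\sum_{i=1}^d 1/i$, $\ALG$ is the greedy cost, and $D$ is the Lagrangian dual value displayed above; since $D\le P\le\OPT$ and $1+H_d\le 1+H_{|I_{\max}|}=O(\log|I_{\max}|)\le O(\log T)$, this gives the theorem.

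First I would fix the price bookkeeping. For a timestep $t$ let $f^t_1,\dots,f^t_r$ be the greedy picks that raise $\rank(X_t)$, in the order chosen, at global steps $j^t_1<\cdots<j^t_r$; set $\sigma^t_k:=1/\rho_{j^t_k}$, which is non-decreasing in $k$. Charging the cost $a(e_j)$ of each pick equally among the unit rank slots it fills gives $\ALG=\sum_t\sum_{k=1}^r\sigma^t_k$.

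Next, construct the dual. For $e$ not a loop of $\M$ and $t\in I_e$, let $\kappa(e,t)$ be the value of $\rank(X_t)$ at the first moment $e$ lies in its span (so $e$ becomes covered at $t$ precisely at step $j^t_{\kappa(e,t)}$); put $\beta_{et}:=\frac{1}{1+H_d}\sigma^t_{\kappa(e,t)}$, and $\beta_{et}:=0$ otherwise. Then I would verify two claims. \emph{(i) Dual feasibility}, $\sum_{t\in I_e}\beta_{et}\le a(e)$: partition the timesteps in $I_e$ into those at which $e$ becomes spanned by $e$'s own pick---present only if $e$ is ever picked, where the relevant $\sigma$'s all equal $a(e)/\ben_X(e)$ and number exactly $\ben_X(e)$, contributing $a(e)$ in total---and those at which $e$ is spanned earlier by other picks; ordering the latter by the step at which $e$ is covered there, at the $k$-th such step the greedy rule could have chosen $e$, whose benefit was then at least the number of timesteps of $I_e$ still lacking $e$, so the price $1/\rho$ there is at most $a(e)$ divided by that count, and the sum of these prices is at most $a(e)H_d$. \emph{(ii) Per-timestep dual value}, $\MST(\beta_{\cdot t})=\frac{1}{1+H_d}\sum_{k=1}^r\sigma^t_k$: from the formula $\MST(\beta_{\cdot t})=\int_0^\infty\big(r-\rank(\{e\in E_t:\beta_{et}\le\theta\})\big)d\theta$, monotonicity of $\sigma^t_k$ in $k$ makes $\{e\in E_t:\beta_{et}\le\theta\}$ exactly the span of an initial segment $\{f^t_1,\dots,f^t_K\}$, hence of rank $K$, and integrating returns $\sum_k\beta_{f^t_k,t}$. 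Combining, $D\ge\sum_t\MST(\beta_{\cdot t})=\frac{1}{1+H_d}\sum_t\sum_k\sigma^t_k=\frac{1}{1+H_d}\ALG$, which is the target.

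The main obstacle is claim (i): one must show the prices at the timesteps where an element $e$ is covered \emph{for free} by other greedy picks form a harmonic series, and the only route I see is the observation that up until $e$ is covered at such a timestep, $e$ itself stayed an available candidate whose marginal benefit was at least the number of timesteps of $I_e$ still needing it---so each such price is at most $a(e)$ divided by the remaining count. The rest is routine given the facts above and the $0/1$ per-timestep increments, which is exactly what turns the accounting in (ii) into an equality; the harmonic number of $d\le|I_{\max}|$ is where the $O(\log|I_{\max}|)$, hence $O(\log T)$, bound appears.
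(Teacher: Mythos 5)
Your proposal is correct and follows essentially the same dual-fitting argument as the paper's: the same Lagrangian dual, the same per-(element, timestep) prices assigned at the moment an element becomes spanned, and the same harmonic-series bound for approximate dual feasibility. The only cosmetic differences are that you scale the dual by $1/(1+H_d)$ upfront and evaluate the per-timestep minimum-weight base value via the sublevel-set integral identity, whereas the paper maintains the min-weight bases incrementally (via an exchange argument) and applies the logarithmic scaling at the end.
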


\begin{proof}
  For the proof, consider some point in the run of the greedy algorithm
  where set $X$ of elements has been picked. We show a setting of duals
  $\beta_{et}$ such that
  \begin{OneLiners}
  \item[(a)] the dual value equals the current primal
    cost $\sum_{e \in X} a(e)$, and
  \item[(b)] the constraints are nearly satisfied, namely $\sum_{t \in
      I_e} \beta_{et} \leq a(e) \log {|I_e|}$ for every $e \in E$.
  \end{OneLiners}
  It is useful to maintain, for each time $t$, a \emph{minimum weight
    base} $B_t$ of the subset $\spn(X_t)$ according to weights
  $\{\beta_{et}\}$.  Hence the current dual value equals $\sum_t \sum_{e
    \in B_t} \beta_{et}$. We start with $\beta_{et} = 0$ and $X_t = B_t
  = \emptyset$ for all $t$, which satisfies the above properties.

  Suppose we now pick $e$ maximizing $\ben_X(e)/a(e)$ and get new set
  $X' := X \cup \{e\}$. We use $X'_t := \{ e' \in X' \mid I_{e'} \ni t
  \}$ akin to our definition of $X_t$. Call a timestep $t$
  ``interesting'' if $\rank(X'_t) = \rank(X_t) + 1$; there are
  $\ben_X(e)$ interesting timesteps. How do we update the duals? For $e'
  \in \spn(X'_t) \setminus \spn(X_t)$, we set $\beta_{e't} \gets
  a(e)/\ben_X(e)$. Note the element $e$ itself satisfies the condition
  of being in $\spn(X'_t) \setminus \spn(X_t)$ for precisely the
  interesting timesteps, and hence $\sum_{t \text{ interesting}}
  \beta_{et} = (a(e)/\ben_X(e)) \cdot \ben_X(e) = a(e)$. For each
  interesting $t \in I_e$, define the base $B'_t \gets B_t + e$; for all
  other times set $B'_t \gets B_t$. It is easy to verify that $B'_t$ is
  a base in $\spn(X'_t)$. But is it a min-weight base?  Inductively
  assume that $B_t$ was a min-weight base of $\spn(X_t)$; if $t$ is not
  interesting there is nothing to prove, so consider an interesting $t$.
  All the elements in $\spn(X'_t) \setminus \spn(X_t)$ have just been
  assigned weight $\beta_{e't} = a(e)/\ben_X(e)$, which by the
  monotonicity properties of the greedy algorithm is at least as large
  as the weight of any element in $\spn(X_t)$. Since $e$ lies in
  $\spn(X'_t) \setminus \spn(X_t)$ and is assigned value $\beta_{et} =
  a(e)/\ben_X(e)$, it cannot be swapped with any other element in
  $\spn(X'_t)$ to improve the weight of the base, and hence $B'_t = B_t
  + e$ is an min-weight base of $\spn(X'_t)$.

  It remains to show that the dual constraints are approximately
  satisfied. Consider any element $f$, and let $\lambda = |I_f|$. The
  first step where we update $\beta_{ft}$ for some $t \in I_f$ is when
  $f$ is in the span of $X_t$ for some time $t$. We claim that
  $\beta_{ft} \leq a(f)/\lambda$. Indeed, at this time $f$ is a
  potential element to be added to the solution and it would cause a
  rank increase for $\lambda$ time steps. The greedy rule ensures that
  we must have picked an element $e$ with weight-to-coverage ratio at
  most as high. Similarly, the next $t$ for which $\beta_{ft}$ is
  updated will have $a(f)/(\lambda - 1)$, etc. Hence we get the sum
  \[ \sum_t \beta_{ft} \leq a(f) \left( \frac1{|I_f|} + \frac1{|I_f| -
      1} + \cdots + 1 \right) \leq a(f) \times O(\log |I_f|). \]
  Since each element can only be alive for all $T$ timesteps, we get the
  claimed $O(\log T)$-approximation.
\end{proof}

Note that the greedy algorithm would solve $\MSM$ even if we had a
different matroid $\M_t$ at each time $t$. However, the equivalence of
\MMM and \MSM no longer holds in this setting, which is not surprising
given the hardness of Theorem~\ref{thm:diff-matrs-wpb}.

\end{document}